\newcommand\tcal{{\cal T}}
\newcommand\tr{{\rm Tr}}
\renewcommand\>{\rangle}
\newcommand\<{\langle}
\newcommand\ot{\otimes}
\newcommand\psihatab{|\hat{\psi}^+\>_{AB}}
\newcommand\psihatabprim{|\hat{\psi}^+\>_{A'B'}}
\newcommand\phitwo{{\phi_2}}
\newcommand\phione{{\phi_1}}
\newcommand\pplus{P_+}
\newcommand\portn{\tilde{P}_+^\perp}
\newcommand\suspicious{suspicious}
\newcommand\cdfs{cdfs}
\newcommand\halfp{the half-property}
\newcommand\emphHalfp{the \emph{half-property}}
\newcommand\tila{\tilde a}
\newcommand\tilb{\tilde b}
\newtheorem{lemma}{Lemma}
\newtheorem{corollary}{Corollary}
\newtheorem{theorem}{Theorem}
\newtheorem{proposition}{Proposition}
\newtheorem{definition}{Definition}
\newtheorem{observation}{Observation}
\newtheorem{fact}{Fact}
\DeclareMathOperator{\SR}{SR}
\DeclareMathOperator{\Sch}{Sch}
\DeclareMathOperator\re{Re}
\newcommand{\rhoGtwo}{{\varrho_W^\Gamma}^{\otimes2}}
\newcommand{\I}{\mathcal{I}}
\renewcommand{\P}{P_+}
\renewcommand{\O}{P_+^\bot}
\newcommand{\Xd}{{\cal X}_d}
\newcommand{\QH}{{\cal H}_Q}
\newcommand{\PH}{{\cal H}_P}
\newcommand\pQp{\<\phi_1|Q|\phi_1\>}
\newcommand\oQo{\<\phi_1^\bot|Q|\phi_1^\bot\>}
\newcommand\pQo{\<\phi_1|Q|\phi_1^\bot\>}
\newcommand\tQt{\<\phi_2|Q|\phi_2\>}
\begin{document}

\title{A Few steps more towards NPT bound entanglement}

\author{
  \L{}ukasz Pankowski$^{(1,2,4)}$
  Marco Piani$^{(2,4)*}$,
  Micha{\l} Horodecki$^{(2,4)}$,
  Pawe{\l} Horodecki$^{(3,4)}$

  \thanks{$^{(1)}$ Institute of Informatics, University of
    Gda\'nsk, Gda\'nsk, Poland}

  \thanks{$^{(2)}$ Institute of Theoretical Physics and Astrophysics,
    University of Gda\'nsk, Gda\'nsk, Poland}

  \thanks{$^{(3)}$ Faculty of Applied Physics and Mathematics,
    Gda\'nsk University of Technology, Gda\'nsk, Poland}

  \thanks{$^{(4)}$ This work is supported by EU grant SCALA
    FP6-2004-IST no.015714.}

  \thanks{$^{*}$ Present affiliation: Institute for Quantum Computing
    and Department of Physics and Astronomy, University of Waterloo,
    Waterloo ON, N2L 3G1 Canada}
}

\maketitle

\begin{abstract}
  We consider the problem of existence of bound entangled states with
  non-positive partial transpose (NPT).  As one knows, existence of
  such states would in particular imply nonadditivity of distillable
  entanglement. Moreover it would rule out a simple mathematical
  description of the set of distillable states. Distillability is
  equivalent to so called $n$-copy distillability for some $n$. We
  consider a particular state, known to be 1-copy nondistillable,
  which is supposed to be bound entangled. We study the problem of its
  two-copy distillability, which boils down to show that maximal
  overlap of some projector $Q$ with Schmidt rank two states does not
  exceed $1/2$.  Such property we call \emphHalfp.  We first show that
  the maximum overlap can be attained on vectors that are not of the
  simple product form with respect to cut between two copies.  We then
  attack the problem in twofold way: a) prove \halfp{} for {\it some
    classes} of Schmidt rank two states b) bound the required overlap
  from above for {\it all} Schmidt rank two states.  We have succeeded
  to prove \halfp{} for wide classes of states, and to bound the
  overlap from above by $c<3/4$.  Moreover, we translate the problem
  into the following matrix analysis problem: bound the sum of the
  squares of the two largest singular values of matrix $A\ot I +I \ot
  B$ with $A,B$ traceless $4\times 4$ matrices, and $\tr A^\dagger A +
  \tr B^\dagger B ={1\over 4}$.
\end{abstract}

\begin{IEEEkeywords}
  Quantum Physics, Quantum Information Theory, Bound entanglement,
  Entanglement distillation
\end{IEEEkeywords}

\section{Introduction}

The Phenomenon of bound entanglement lies at the heart of entanglement
theory \cite{rmp-ent}.  A bound entangled state of a
bipartite system is one which is entangled, but cannot be used for
quantum communication.  A possibility of transmitting qubits via
bipartite states is connected with their \emph{distillability}
\cite{BBPSSW96,BDSW1996} i.e. the possibility of obtaining asymptotically
pure maximally entangled states by local operations and classical
communication from many copies of a given state.  Such maximally
entangled states can be then used for transmitting qubits by means of
teleportation.  It is known that all entangled two qubit states are
distillable \cite{HHH1997-distill}; however, already for $3\ot 3$ or
$2\ot 4$ systems there exist bound entangled states --- entangled
states that cannot be distilled.  Such states involve irreversibility:
to create them by LOCC one needs pure entanglement
\cite{Vidal-cost2002,YangHHS2005-cost}, but no pure entanglement can
be obtained back from them.  They constitute a sort of a ``black hole''
of quantum information theory \cite{Terhal1}, and have been also
compared to a single heat bath in thermodynamics, since to create the
latter one has to spend work (as in Joule experiment), yet no work can
be obtained back from it by a cyclic process
\cite{termo,thermo-ent2002}.

Bound entangled states, although directly not useful for quantum
communication, are not entirely useless.  They can be helpful
indirectly, via activation like process: in conjunction with some
distillable state, they allow for better performance of some tasks
\cite{activation,VollbWolf_activation}.  It was even recently shown
that any bound entangled state can perform nonclassical task via kind
of activation \cite{Masanes1_activation}.  This is the first result
showing that entanglement always allows for nonclassical tasks.
Finally, it was also shown that some bound entangled states can be
useful for production of secure cryptographic key
\cite{pptkey,keyhuge,smallkey}.  This has lead to the possibility of
obtaining unconditionally secure key via channels which cannot reliably
convey quantum information \cite{GeneralUncSec,uncond-prl}.

Since bound entangled states present qualitatively different type of
entanglement from the distillable states behaving in a strange way, it
is more than desired to have some characterization of the set.  It has
been shown \cite{bound} that any state with {\it positive partial
  transpose} (PPT) \cite{Peres1} is non-distillable.  A long standing
open problem is whether the converse is also true.  Since the
discovery of bound entanglement the question ``Are all states which do
not have positive partial transpose distillable?'' has remained open.

Provided it has a positive answer, we would have computable criterion
allowing to distinguish between bound and free entanglement.  However
the importance of the problem is not merely due to technical
(in)convenience.  As a matter of fact, in \cite{ShorST2001} dramatic
consequences of a negative answer have been discovered.  Namely, for
some hypothetical bound entangled state $\varrho$ with a non-positive
partial transpose (NPT) there exists another bound entangled state
$\sigma$ such that the joint state $\varrho\ot\sigma$ is no longer a
bound entangled state.  In \cite{VollbWolf_activation} it was shown
that an arbitrary NPT bound entangled state would exhibit such a
phenomenon (it also follows from \cite{WernerPPT} via Jamio\l{}kowski
isomorphism).  Such a phenomenon of ``superactivation'' has been
indeed found in a multipartite case \cite{ShorST-superactiv} and
translated into extreme nonadditivity of multipartite quantum channel
capacities \cite{DuHoCi04}.  (In a multipartite case, though still very
strange, this can be easier to understand than in a bipartite case due
to a rich state structure allowed by many possible splits between the
parties.)  In quantum communication language the phenomenon of
``superactivation'' would mean that two channels (supported by two-way
classical communication) none of them separately can convey quantum
information if put together, can be used for reliable transmission of
qubits.  Analogous problem for channels that are not supported by
classical communication was recently solved by Smith and Yard
\cite{SmithYard} (see also \cite{CzekajPawel} in this context).
Another implication of the existence of NPT bound entangled states is that
the basic measure of entanglement --- the \emph{distillable entanglement}
--- would be non-convex.

The problem of existence of NPT bound entanglement has been attacked
many times since the beginning.  In \cite{reduction} it was shown that
it is enough to concentrate on one parameter family of the Werner states
\cite{Werner1989}:
if NPT bound entangled states exist at all, some of the Werner states must be NPT
bound entangled too.  There also exists the following characterization of
distillable states \cite{bound}: A state is distillable, if some
number of copies $\varrho^{\ot n}$ can be locally projected to obtain
a two-qubit NPT state.  The state is then called $n$-copy distillable.
Therefore, a state is non-distillable if it is not $n$-copy
distillable for all $n$.  The whole problem is to relate this rather
non operational characterization to the NPT property.

Subsequently, two attempts to solve the problem have been then made
independently \cite{DiVincenzoSSTT1999-nptbound,DurCLB1999-npt-bound}.
In particular the authors have singled out a set of the Werner states
which is expected to contain only non-distillable states.  Moreover
for any $n$ they have shown a subset of the Werner states containing solely
$n$-copy non-distillable states (see also
\cite{Bandyopadhyay2003-n-distil} in this context).  However the
subsets are decreasing
when $n$ increases.  One might ask at this point, whether $n$-copy
non-distillability implies the same for $n+1$.  Then to solve the
problem it would be enough to check whether a state is 1-copy
non-distillable, which for the Werner states is not hard to do.  However
it was shown in \cite{Watrous2003-n-dist}
that this is not true.  For any $n$ states have been found, which are
$n$-copy non-distillable, but are $(n+1)$-copy distillable.

Another way to attack the problem would be the following: let us take
a larger but mathematically more tractable class of operations than
LOCC --- the ones that preserve PPT states \cite{Rains1999,WernerPPT}.
If one can show that there are some NPT states that are not
distillable by this larger class of operations, then it would be also
true for LOCC, and the problem would be solved.  However in
\cite{WernerPPT} it have been shown that all NPT states are
distillable by PPT preserving operations.  This shows that such an
approach cannot solve our problem.

There are some sufficient conditions for distillability.  E.g. if a
state violates the reduction criterion, then it is distillable
\cite{reduction}.  In
\cite{Clarisse2005-distil,Clarisse2004-distil-maps} Clarisse provided
a systematic way of finding such conditions.  His conditions are
related to a description of the set of $1$-copy distillable states by means
of some maps and associated witnesses, in analogy to describing the set of
separable states by means of entanglement witnesses and positive maps
\cite{sep1996,TerhalReview}.  There remains the main problem of
checking such conditions on $n$-copies, to be able to prove also
$n$-copy distillability.  Another connection with separability problem
was found in \cite{KrausLC} where it was shown that the problem of
existence of NPT bound states is equivalent to showing that some
operators labeled by $n$ are entanglement witnesses.  This connection
was exploited in \cite{vianna-bound} to provide exact numerical
evidence for $2$-copy undistillability of one-copy undistillable
qutrit Werner states.

For further attempts to solve the problem see \cite{ClarissePhd} where
one can also find relevant literature.  There have been several more
recent attempts.  Unfortunately the proofs given in two of them
\cite{ChattoSarkar,SimonNPTbound} turned out to have some gaps.  The
last partial result is due to \cite{BrandaoEisert2007} where a notion
of $n$-copy correlated distillability was introduced, and used to
characterize the convex hull of the non-distillable states.

We have seen that a considerable effort has been put so far without
providing the final solution, but definitely enriching
``phenomenology'' of the problem.  In such situation we have decided
to consider a modest goal.  Namely we analyze two-copy distillability
only, and we focus on a single state, drawn from the ``suspicious''
family of the Werner states.  We choose a dimension $C^4\ot C^4$, in which
case, the problem reduces to analysis of suitable properties of some
{\it projector}.  Namely, we ask whether
\begin{align}
  \sup_{\phi_2} \<\phi_2| Q |\phi_2\>\leq {1\over 2}
\end{align}
where $Q$ is our projector on bipartite system $C^{16}\ot C^{16}$, and
supremum is taken over all states with at most two Schmidt
coefficients.  If it is true it would mean that our state would be
two-copy non-distillable.  The above condition is essentially a
special case of the condition obtained in
\cite{DiVincenzoSSTT1999-nptbound,DurCLB1999-npt-bound}.  There exists
numerical evidence that it is indeed true, however the analytical
proof is still lacking.

To begin with, we have not been able to solve even this modest
problem.  However we have obtained numerous partial results.  First of
all we have shown that the maximum overlap can be attained on vectors
that are not of the simple product form with respect to cut between
two copies.  Then we have focused research on two main approaches.
One is to provide the largest class of Schmidt rank two states
$\phi_2$ which satisfy the above inequality (a state $\phi_2$
satisfying the inequality is said to have \emphHalfp).  The other is
to provide some nontrivial bound on the quantity $\<\phi_2| Q
|\phi_2\>$.  Regarding the first approach we have provided several
classes of states satisfying \halfp{}.  In particular we have
translated the problem into a concise matrix analysis problem, and
have solved it for wide class of matrices --- normal matrices.  This
translates into a wide class of states $\phi_2$ possessing \halfp{}.
We have also shown that the problem reduces to determining whether
some family of symmetric mixed states has Schmidt number greater than
two (i.e. cannot be written as mixture of states with Schmidt rank
two).  This allows to attack the problem by means of entanglement
measures.  We have performed suitable analysis for the negativity, which
however provided smaller class of states with \halfp{} than the
previous method.

As far as the second approach is concerned, we have first analyzed the
easier problem, of supremum over {\it product} states (Schmidt rank
one).  We obtained that it gives $3/8$.  By Schwarz inequality one
obtains that the supremum over Schmidt rank two states can be at most
twice as much, giving then $3/4$.  However, as we argue, such
approach, if continued for larger number of copies, can give only
the trivial bound $1$ for $n\to \infty$.  We subsequently prove that our
quantity is for sure {\it strictly less} than $3/4$.  By continuity we
are able to push it to $\approx 0.7497$.  We also provide a couple of
other results, that may be useful for further investigation of the
problem.

The paper is organized as follows.  In section \ref{sec:problem} we
specify the main problem. In particular we introduce projector $Q$
related to two-copy distillability (and its generalizations to more
copies) and define \emphHalfp.  Then we show (Sec. \ref{sec:superpos})
that one cannot solve the problem by showing that the Schmidt rank two
states $\phitwo$ achieving the maximum are product with respect to cut
between the copies.  Subsequently (Sec.~\ref{sec:normal}) the problem
of the half-property is translated into matrix analysis problem, regarding
maximization of the sum of the squares of the two largest singular values of
matrix $A\ot I + I\ot B$ under some constraints.  We solve the problem
for normal matrices $A,B$ and obtain a wide class of states satisfying
\halfp{}. Next we show (Sec.~\ref{sec:cdf}) that any two pair state
for which at least one system from each pair is effectively two-level
one, satisfies \halfp{}. Then we turn to an easier problem of
optimizing the overlap of $Q$ with product states
(Sec.~\ref{sec:product}). We compute maximum for general case of
$n$-copies, obtaining $3/8$ for two copies. This gives bound $3/4$ for
the overlap of all Schmidt rank-two states with $Q$. We then show \halfp{}
for superpositions of the product states attaining maximum.  Then
(Sec.~\ref{sec:bounds}) we observe a trade-off between two parts of
the overlap $\<\phitwo|Q|\phitwo\>$ --- the ``diagonal'' and the
``coherence'' part, if the former is large, then the latter must be
small.  Since coherence part is bounded by diagonal one, this allows
us to go slightly below $3/4$, namely we obtain $\approx 0.7497$.
Finally we apply entanglement measures, and two-positive maps to the
problem in Sec.~\ref{sec:ent}, providing some exemplary results, which
for a while are not stronger than the ones obtained in previous
sections.  We also point that entanglement measure that would
distinguish between separable, bound entangled and distillable states
must be discontinuous.

\section{Specifying the problem}
\label{sec:problem}

It is known that if NPT bound entangled states exist then such state
must exist among the Werner states. The latter states are 
of the form
\begin{align}
  \varrho_W = p \varrho_s + (1 - p) \varrho_a
\end{align}
where
\begin{align}
  \varrho_s = \frac{P_s}{d_s}, \quad \varrho_a = \frac {P_a} {d_a}
\end{align}
with $P_s$ and $P_a$ being the projectors onto the symmetric and the antisymmetric
subspaces of the Hilbert space $\mathbb{C}^d\ot \mathbb{C}^d$ and
$d_s=d (d+1) /2$ and $d_a=d (d-1)/2$ their dimensions.  Alternatively
the Werner states may be written as
\begin{equation}
  \varrho_W = \frac{I + \alpha V}{d^2 + \alpha d}
\end{equation}
where $\alpha\in[-1,1]$ ($V=P_s - P_a$ is a swap operator).  It is
known that they are separable and PPT for $p \ge \frac12$ while for
$p < p_0 = \frac{d+1}{4d-2}$ they are distillable and for
$p\in[p_0, \frac12)$ they are NPT and it is not known whether they are
distillable. Actually it is conjectured that for the whole region
$p\in[p_0, \frac12)$ the states are NPT bound entangled
\cite{DiVincenzoSSTT1999-nptbound,DurCLB1999-npt-bound} (We will call
them the \emph{\suspicious{}} Werner states).

In \cite {bound} the characterization of the distillable states was
obtained in terms of so called $n$-copy distillability.  Namely we say
that a state is $n$-copy distillable, if $\varrho^{\ot n}$ can be
locally projected to a obtain two-qubit NPT state. Equivalently a state
$\varrho$ is $n$-copy distillable if it satisfies
\begin{equation}
  \label{eq:n-copy-distillable}
  \inf_{\phi_2} \; \<\phi_2| \varrho^{\Gamma\ot n} |\phi_2\> < 0
\end{equation}
where the infimum is taken over all pure states with Schmidt rank two,
and the superscript $\Gamma$ denotes the partial transposition.
Now a state is distillable iff it is $n$-copy distillable for some
$n$. Hence to prove that a state is non-distillable one has to show
that for all $n$
\begin{equation}
  \label{eq:n-copy-undistillable}
  \inf_{\phi_2} \; \<\phi_2| \varrho^{\Gamma\ot n} |\phi_2\> \ge 0.
\end{equation}

For the \suspicious{} Werner states it is known that they are one copy
undistillable more over it was numerically checked that they are also
two and three copy undistillable
\cite{DiVincenzoSSTT1999-nptbound,DurCLB1999-npt-bound}. As a matter
of fact for all $n$ an $n$-copy undistillable subset of the suspicious
Werner states is known, but the subsets are shrinking with $n$ giving
an empty set in the limit of $n\rightarrow \infty$.

Anyway, it is likely that even the most entangled state from the
\suspicious{} region is undistillable.  In this paper we will focus
just on this boundary state (i.e. with $p=p_0$) and moreover we
consider only the $\mathbb{C}^4 \ot \mathbb{C}^4$ case (this gives
$p=\frac5{14}$ or $\alpha=-\frac12$). The reason is that the problem
of $n$-copy distillability for the boundary state in this dimension
reduces to analyzing the overlap of rank two states with some {\it
  projector}.

Since we will be mostly concerned with two copy undistillability let
us begin with $n=2$.  The normalization of $\rhoGtwo$ has no impact on
the existence of $\phi_2$ satisfying \eqref{eq:n-copy-undistillable},
thus for $d=4$ we can simplify the expression of $\rhoGtwo$ to
\begin{align}
  \label{eq:rhoGtwo}
  \rhoGtwo &\sim {(I - {\textstyle\frac12}V)^\Gamma}^{\ot2}
  = \left(I - {\textstyle\frac{d}2} P_+\right)^{\ot2}
  \\&= (\O\ot\O + \P\ot\P) - (\O\ot\P + \P\ot\O)
\end{align}
where
\begin{equation}
  \O = I - \P, \quad \P = |\psi_+\>\<\psi_+|, \quad
  |\psi_+\> = \frac{1}{\sqrt{d}} \sum_{i=0}^{d-1}|ii\>.
\end{equation}
If we replace the minus sign with the plus sign in formula \eqref{eq:rhoGtwo} we
get the identity. Thus it is evident that two-copy undistillability,
i.e. \eqref{eq:n-copy-undistillable} with $n=2$, is equivalent to
\begin{equation}
  \label{eq:two-copy-undistillable-Q}
  \<\phi_2| Q |\phi_2\> \le \frac12
\end{equation}
for all Schmidt rank two states $\phi_2$ in the cut $AA':BB'$, or,
using a shorthand notation, for all $\phi_2 \in \SR_2(AA':BB')$, with
\begin{equation}
  Q = \O\ot\P + \P\ot\O.
\end{equation}
We will call equation \eqref{eq:two-copy-undistillable-Q} \emphHalfp.
Thus our Werner state is two copy undistillable iff all rank two
states $\phi_2$ satisfy \halfp{}.  In particular, equality in \halfp{}
\eqref{eq:two-copy-undistillable-Q} for some $\phi_2$ is equivalent to
equality in \eqref{eq:n-copy-undistillable} with $n=2$.

Thus to prove two copy undistillability we would have to show that all
two pair rank two states $\phi_2$ satisfy \halfp{}. We will show that
this is the case for a wide range of $\phi_2$ states.

We will use the notion of $\phi_k$ to denote the state of Schmidt rank
$k$ in Alice versus Bob cut. If not explicitly specified it should be
clear from the context whether we mean a state on a single pair,
i.e. $\phi_k\in\SR_k(A:B)$ or on both pairs, i.e
$\phi_k\in\SR_k(AA':BB')$.

In some cases we will consider the projector $Q$ for any dimension $d$,
though only for $d=4$ it is connected with two copy distillability of
the boundary state.

Analogously to the two copy case one can relate $n$-copy distillability of
the boundary Werner state with the overlap of rank two states with
some projectors $Q_n$. Namely for $d=4$ we have
\begin{align}
   \varrho_W^{\Gamma\ot n} &\sim (I - {\textstyle\frac{d}2}P_+)^{\ot n}
  = {\cal P}_+ - {\cal P}_-
  = I^{\ot n} - 2{\cal P}_-
\end{align}
where ${\cal P}_+$ and ${\cal P}_-$ are projectors satisfying ${\cal
  P}_+ + {\cal P}_- = I^{\ot n}$.  We define $Q_n$ as
\begin{align}
  \label{eq:Q_n}
  Q_n \equiv {\cal P}_- &= \frac12 \left(
    I^{\ot n} - \left(I - {\textstyle\frac{d}2} P_+\right)^{\ot n}
  \right)
\end{align}
so that $\<\phi_2|Q_n|\phi_2\> \le \frac12$ iff
$\<\phi_2|\varrho_W^{\ot n}|\phi_2\> \ge 0$.

\begin{lemma}
  For $d=4$ projectors $Q_n$ satisfy the following recursive formula
  \begin{align}
    Q_1 &= P_+, \\
    \label{eq:Qn-recursive}
    Q_{n+1} &= Q_n \ot Q_1^\bot + Q_n^\bot \ot Q_1.
  \end{align}
\end{lemma}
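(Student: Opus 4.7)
The plan is a short direct computation exploiting the closed form \eqref{eq:Q_n}. The key observation, specific to $d=4$, is that the operator $I - \tfrac{d}{2}P_+$ appearing in that formula simplifies to a difference of complementary projectors. Let me set
\begin{equation*}
R \equiv I - \tfrac{d}{2}P_+ = P_+^\bot - P_+ \qquad (\text{since } d=4),
\end{equation*}
using $P_+ + P_+^\bot = I$. Then \eqref{eq:Q_n} reads $Q_n = \tfrac{1}{2}(I^{\otimes n} - R^{\otimes n})$, and in particular $Q_1 = \tfrac{1}{2}(I - R) = P_+$, which matches the base case.

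The base case being settled, I would next record two identities that make the inductive step trivial. First, the complementary projector has the conjugate form
\begin{equation*}
Q_n^\bot = I^{\otimes n} - Q_n = \tfrac{1}{2}\bigl(I^{\otimes n} + R^{\otimes n}\bigr).
\end{equation*}
Second, on a single copy one has $Q_1^\bot + Q_1 = I$ and $Q_1^\bot - Q_1 = P_+^\bot - P_+ = R$.

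Substituting these into the right-hand side of the claimed recursion and grouping by $I^{\otimes n}$ and $R^{\otimes n}$ tensor factors gives
\begin{align*}
Q_n \otimes Q_1^\bot + Q_n^\bot \otimes Q_1
&= \tfrac{1}{2} I^{\otimes n} \otimes (Q_1^\bot + Q_1) + \tfrac{1}{2} R^{\otimes n} \otimes (Q_1 - Q_1^\bot) \\
&= \tfrac{1}{2} I^{\otimes n+1} - \tfrac{1}{2} R^{\otimes n+1} \\
&= Q_{n+1},
\end{align*}
which is the desired identity. There is no real obstacle here: the only thing one has to notice is the special feature of $d=4$ that lets $I - \tfrac{d}{2}P_+$ be rewritten as a difference of orthogonal projectors, after which the recursion is a one-line consequence of the tensor-power structure of \eqref{eq:Q_n}. (As a sanity check, at $n=1$ the recursion yields $Q_2 = P_+ \otimes P_+^\bot + P_+^\bot \otimes P_+$, which agrees with the projector $Q$ defined just before \eqref{eq:two-copy-undistillable-Q}.)
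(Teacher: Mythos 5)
Your proof is correct and is essentially the same argument as the paper's: both are direct algebraic verifications from the closed form \eqref{eq:Q_n}, using the $d=4$ fact that $I-\tfrac{d}{2}P_+=I-2P_+$ so that $(I-\tfrac{d}{2}P_+)^{\otimes n}=I^{\otimes n}-2Q_n$. You merely run the substitution from the right-hand side of the recursion toward $Q_{n+1}$ rather than the reverse, with the (pleasant but inessential) extra remark that $I-2P_+=P_+^\bot-P_+$ is an involution.
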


\begin{proof}
  For $n=1$ it is evident, for $n > 1$ by substituting $Q_n$
  transformed to
  \begin{align}
    \left(I - {\textstyle\frac{d}2} P_+\right)^{\ot n}
    = I^{\ot n} - 2Q_n
  \end{align}
  into $Q_{n+1}$ we obtain the recursive formula.
\end{proof}

We have $Q_2=Q$ and $Q_3$ has the form
\begin{align}
  Q_3 &= \P\ot\O\ot\O + \O\ot\P\ot\O \nonumber
  \\&\phantom{=}+ \O\ot\O\ot\P + \P\ot\P\ot\P.
\end{align}

\section{Existence of nontrivial maxima of $\tQt$}
\label{sec:superpos}

In \cite{lewenstein-2000-47-primer} a class of states of the form
$\phi_1\ot\phi_2$ was shown to provide local minimum for
\eqref{eq:n-copy-undistillable} with $d=3$, $\alpha=-\frac12$,
$n=2$. This suggests the following question: is it that all local
minima are of the form $\phi_1\ot\phi_2$?  In our specific case it
translates into the same question about the maximum.  It is easy to
see that states of the form $\phi_2\ot \phi_1$ may attain equality in
\halfp{} and nothing more.  We will now examine a question whether
there are other rank two states which attain equality in \halfp{} and
are not of this form.  The answer is unfortunately positive.

\subsection{Example of equality in superpositions}

We show that there are nontrivial superpositions of $\phi_2\ot\phi_1$
and $\phi_1'\ot\phi_2'$ which are rank two states and attain equality
in \halfp{}.

For any state of the form $\phi_2 \ot \phi_1$ its projection on $Q$
is given by
\begin{align}
  \<\phi_2\ot\phi_1| \,Q\, |\phi_2\ot\phi_1\> = p + q - 2pq \leq \frac12
\end{align}
where
\begin{align}
  p = \<\phi_2| P_+ |\phi_2\> \le \frac2d, \quad
  q = \<\phi_1| P_+ |\phi_1\> \le \frac1d
\end{align}
and the maximal value is attainable for $p=\frac2d$ and, for $d=4$, any $q$.

If we take superpositions of two states of that form with one of them
swapped
\begin{align}
  |\psi\> = \sqrt{r} |\phi_2\> \ot |\phi_1\>
  + \sqrt{1 - r} |\phi_1'\> \ot |\phi_2'\>
\end{align}
satisfying
\begin{align}
  \<\phi_2| P_+ |\phi_2\> &= \<\phi_2'| P_+ |\phi_2'\> = \frac2d, \\
  \<\phi_1| P_+ |\phi_1\> &= \<\phi_1'| P_+ |\phi_1'\> = 0
\end{align}
then
\begin{align}
  \<\psi| Q |\psi\> = \frac12.
\end{align}
States of the form $\psi$ have in general Schmidt rank higher than
two but there are also rank two states among them such as the
following class of states
\begin{equation}
  |\phi\> = \sqrt{r} \, |01\>\ot|\psi_+^2\> + \sqrt{1-r} \, |\psi_+^2\>\ot|01\>
\end{equation}
where
\begin{equation}
  |\psi_+^2\> = \frac1{\sqrt{2}} (|00\> + |11\>).
\end{equation}
The class $\phi$ can be rewritten in Alice versus Bob cut as
\begin{align}
  |\phi^{AA':BB'}\> &=
  \frac1{\sqrt{2}} |00\> \ot \left(\sqrt{r} |01\> + \sqrt{1-r} |10\>\right)
  \nonumber\\&\phantom{=}+
  \frac1{\sqrt{2}} \left(\sqrt{r} |10\> + \sqrt{1-r} |01\>\right) \ot |11\>
\end{align}
which shows that $\phi$ are rank two states in this cut.

\subsection{Form of $\phitwo$ states maximizing overlap
with $I\ot \pplus$}

In contrast to the previous section we shall show here that two pair
$\phitwo^{AA':BB'}$ which maximizes overlap with $I\ot \pplus$ must be
of the form $\phione^{A:B}\otimes \phitwo^{A':B'}$. (This result is
inspired by \cite{ChattoSarkar}) Of
course, the maximum attainable value of the projection on $P_+$ for one pair
Schmidt rank two state is $2/d$. Let $\phitwo$ be a two pair state
which attains this value.  Then we have
\begin{align}
  |\<\phitwo|\phi\>|^2= 2/d,
\end{align}
where $\phi$ is some normalized state from subspace $I\ot \pplus$,
i.e.  it is of the form
\begin{align}
  |\phi\>=\sum_j a_j |e_j f_j\>_{AB} \ot {1 \over {\sqrt d}}
  \sum_i |ii\>_{A'B'}.
  \label{eq:phi-ip}
\end{align}
Moreover also
\begin{align}
  \sup_{\phitwo\in SR2}|\<\phitwo|\phi\>|^2={2\over d}.
\end{align}
On the other hand we know that for any $\psi$
\begin{align}
  \label{eq:sum-of-coeff-sqr}
  \sup_{\phitwo\in SR2}|\<\phitwo|\psi\>|^2 = \mu_1^2+\mu_2^2,
\end{align}
where $\mu_1,\mu_2$ are the two largest Schmidt coefficients of $\psi$
in the same cut that $\phi_2$ has rank two, i.e. $AA':BB'$. Thus, as
the Schmidt coefficients of $\phi$ has the form ${a_j}/{\sqrt{d}}$ and
each of them occurs $d$ times in the composition, we have
\begin{align}
  |\<\phitwo|\phi\>|^2={2a_{\max}^2\over d},
\end{align}
where $a_{\max}=\max_j a_j$.  Therefore $a_{\max}=1$, i.e.
\begin{align}
  |\phi\> =  |x\>_A|y\>_B|\psi_+\>_{A'B'},
\end{align}
where $|x\>,|y\>$ are some states.  Writing
$|\phitwo\>=c_1|r_1\>|s_1\>+c_2|r_2\>|s_2\>$ we get
\begin{align}
  |\<\phitwo|\phi\>|^2
  ={1\over d}\left|c_1 \alpha_1 + c_2 \alpha_2\right|^2
  \le {1\over d} \left(c_1|\alpha_1| + c_2|\alpha_2|\right)^2
\end{align}
where
\begin{align}
  \alpha_1 = \sum_i({}_{AA'}\<r_1|x\>_A|i\>_{A'}) ({}_{BB'}\<s_1|y\>_B|i\>_{B'})
  \\
  \alpha_2 = \sum_i({}_{AA'}\<r_2|x\>_A|i\>_{A'})({}_{BB'}\<s_2|y\>_B|i\>_{B'})
\end{align}
Since $|\alpha_1|,|\alpha_2| \le 1$, to get $|\<\phitwo|\phi\>| =
\frac2d$ we must have $|\alpha_1|=|\alpha_2|=1$ and
$c_1=c_2=\frac1{\sqrt2}$.  It follows that $|r_1\>$ and $|r_2\>$
belong to the subspace $|x\>\<x|\ot I$. Which means that
$|r_{1(2)}\>=|x\>_A|\tilde r_{1(2)}\>_{A'}$, where $|\tilde
r_{1(2)}\>_{A'}$ are some orthogonal states.  Similar relations hold for
$|s_1\>$ and $|s_2\>$.  Thus
\begin{align}
  \phitwo = |x\>_A|y\>_B
\bigl(|\tilde r_1\>_{A'}|\tilde s_1\>_{B'}+ |\tilde
r_2\>_{A'}|\tilde s_2\>_{B'}\bigr)/\sqrt{2},
\end{align}
i.e. we obtain the desired form.

\section{States having ``normal'' projection on $Q$}
\label{sec:normal}

Here we show that if a two pair Schmidt rank two state $\phi_2$ has the
projection on $Q$ which is isomorphic to a normal operator through
a state--operator isomorphism then it satisfies \halfp{}.  To this
end we will reformulate our optimization task in terms of the two largest
Schmidt coefficients of states of the subspace defined by the
projector $Q$.  Then we will use the state--operator isomorphism to obtain
optimization problem involving matrices and finally will solve the
problem for normal matrices.

We have the following lemma, which is a generalization of a similar
one for product states \cite{AcinVC-mregs}

\begin{lemma}
  \label{lm:psi_P-form}
  For any projector $P$ acting on a bipartite system
  \begin{align}
    \sup_{\phi_2 \in \SR_2} \<\phi_2|P|\phi_2\> =
    \sup_{\psi \in \PH} (\mu_1^2 + \mu_2^2)
  \end{align}
  where $\mu_1$ and $\mu_2$ are the two largest Schmidt coefficients of
  $\psi$ and $\PH$ is the subspace defined by the projector $P$.
\end{lemma}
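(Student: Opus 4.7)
The plan is to reduce the claim to two standard ingredients and then show each supremum bounds the other. The first ingredient is the elementary fact, already invoked in \eqref{eq:sum-of-coeff-sqr}, that for any vector $\psi$
\begin{align}
  \sup_{\phi_2\in\SR_2}|\<\phi_2|\psi\>|^2 = \mu_1^2+\mu_2^2,
\end{align}
with $\mu_1,\mu_2$ the two largest Schmidt coefficients of $\psi$, and that the optimum is attained by truncating the Schmidt decomposition to its top two terms. The second ingredient is Cauchy--Schwarz applied with respect to the positive operator $P$: for any unit vector $\psi\in\PH$ (so $P|\psi\>=|\psi\>$) and any $\phi_2$,
\begin{align}
  |\<\phi_2|\psi\>|^2 = |\<\phi_2|P|\psi\>|^2 \le \<\phi_2|P|\phi_2\>\<\psi|P|\psi\> = \<\phi_2|P|\phi_2\>.
\end{align}

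For the $\le$ direction, I would fix $\phi_2\in\SR_2$ and, assuming $P|\phi_2\>\neq 0$, define $|\psi_{\phi_2}\>=P|\phi_2\>/\|P|\phi_2\>\|$, which lies in $\PH$. A direct calculation yields $\<\phi_2|P|\phi_2\>=\|P|\phi_2\>\|^2=|\<\phi_2|\psi_{\phi_2}\>|^2$, and the first ingredient, applied to $\psi_{\phi_2}$, gives $\<\phi_2|P|\phi_2\>\le\mu_1(\psi_{\phi_2})^2+\mu_2(\psi_{\phi_2})^2$. Taking the supremum over $\phi_2$ shows that the left-hand side of the lemma is at most the right-hand side.

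For the $\ge$ direction, I would fix $\psi\in\PH$ with Schmidt decomposition $|\psi\>=\sum_i\mu_i|e_i\>|f_i\>$ in decreasing order of $\mu_i$, and take the explicit Schmidt-rank-two witness $|\phi_2\>=(\mu_1|e_1 f_1\>+\mu_2|e_2 f_2\>)/\sqrt{\mu_1^2+\mu_2^2}$. Then $\<\phi_2|\psi\>=\sqrt{\mu_1^2+\mu_2^2}$, and the second ingredient applied to this pair gives $\mu_1^2+\mu_2^2\le\<\phi_2|P|\phi_2\>$. Taking the supremum over $\psi\in\PH$ yields the reverse bound.

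The lemma is essentially routine; the only point requiring a little care is that the two constructions above are compatible, in the sense that the witnessing $\phi_2$ from the second direction and the induced $\psi_{\phi_2}$ from the first are consistent choices, so that both inequalities are simultaneously tight. This is what guarantees equality of the two suprema rather than merely a two-sided bound. The main conceptual use of the lemma in the sequel will be to recast the half-property as a statement about the two largest singular values of the matrices associated, via the state--operator isomorphism, to vectors in $\PH$, which is the reformulation that will be exploited in Section~\ref{sec:normal}.
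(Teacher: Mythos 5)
Your proof is correct and follows essentially the same route as the paper: both rest on the identity $\<\phi_2|P|\phi_2\>=\sup_{\psi\in\PH}|\<\phi_2|\psi\>|^2$ (realized by $\psi=P|\phi_2\>/\|P|\phi_2\>\|$) together with the Schmidt-truncation fact \eqref{eq:sum-of-coeff-sqr}, the paper simply exchanging the two suprema where you split the argument into two one-sided bounds. Your closing worry about the two constructions being ``compatible'' is unnecessary --- once each supremum is shown to dominate the other, equality follows automatically --- but this does not affect the validity of the argument.
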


Note that this lemma immediately generalizes to rank $k$ states for
arbitrary fixed $k\ge1$.

\begin{proof}
  Let us observe that for all $\psi \in \PH$
  \begin{align}
    \<\phi_2|P|\phi_2\> \ge \<\phi_2|\psi\>\<\psi|\phi_2\>.
  \end{align}
  Moreover there exists $\psi \in \PH$ which reaches the equality
  \begin{align}
    \<\phi_2|P|\phi_2\> = \<\phi_2|\psi\>\<\psi|\phi_2\>,
  \end{align}
  namely $|\psi\> = \frac{P |\phi_2\>}{\|P |\phi_2\>\|}$ if
  $\|P|\phi_2\>\| \neq 0$ or any $\psi \in \PH$ otherwise.  From these
  two observations we get
  \begin{align}
    \label{eq:2P2-as-sup}
    \<\phi_2|P|\phi_2\> = \sup_{\psi \in \PH} |\<\phi_2 |\psi\>|^2.
  \end{align}
  From \eqref{eq:2P2-as-sup} and the fact stated in equation
  \eqref{eq:sum-of-coeff-sqr} we conclude
  \begin{align}
    \sup_{\phi_2 \in \SR_2} \<\phi_2|P|\phi_2\> &=
    \sup_{\psi \in \PH} \sup_{\phi_2 \in \SR_2} |\<\phi_2 |\psi\>|^2 \\
    &= \sup_{\psi \in \PH} (\mu_1^2 + \mu_2^2)
  \end{align}
  where $\mu_{1 }$ and $\mu_2$ are the two largest Schmidt coefficients of
  $\psi$.
\end{proof}

Let us now reformulate the problem in terms of matrices.  Consider the
following state--operator isomorphism
\begin{align}
  \label{eq:soi}
  |\psi\> = \sum a_{ij} |i\>|j\> \longleftrightarrow
  X = \sum a_{ij} |i\>\<j|.
\end{align}
In this isomorphism $\<\psi|\psi\> = \tr X^\dagger X$ and the Schmidt
coefficients of a state $\psi$ are equal to the singular values of the
corresponding operator $X$.  Therefore by lemma \ref{lm:psi_P-form}
and the equality between the Schmidt coefficients of $\psi$ and the singular
values of $X$ we have
\begin{align}
  \label{eq:opt-X-form}
  \sup_{\phi_2} \<\phi_2|P|\phi_2\>
  = \sup_X (\sigma_1^2 + \sigma_2^2)
\end{align}
where $\sigma_1$ and $\sigma_2$ are the two largest singular values of
operator $X$ and the supremum is taken over all operators $X$ which
correspond to states from $\PH$ through the state--operator isomorphism
\eqref{eq:soi}.

\subsection{Half-property in terms of matrices}

Let us now apply the above consideration to our particular projector
$Q$.  All states $\psi_Q \in \QH$ where $Q= \O\ot\P + \P\ot\O$ have the
form
\begin{align}
  |\psi_Q\> =
  \sqrt{p} \, |\psi_{(1)}\> |\psi_+\> +
  \sqrt{1-p} \, |\psi_+\> |\psi_{(2)}\>
\end{align}
where $p \in [0, 1]$ and
\begin{align}
  \label{eq:ort}
  |\psi_{(1)}\> \perp |\psi_+\>, \quad |\psi_{(2)}\> \perp |\psi_+\>.
\end{align}

The image of $\psi_Q$ states in the above state--operator isomorphism
have the form
\begin{align}
  X =
  \sqrt{\frac{p}{d}} \; \tilde{A} \otimes I +
  \sqrt{\frac{1-p}{d}} \; I \otimes \tilde{B}
\end{align}
where
\begin{align}
  &\tr \tilde{A} = \tr \tilde{B} = 0
  && \textrm{(orthogonality, i.e. \eqref {eq:ort})} \\
  &\tr \tilde{A}^\dagger \tilde{A} = \tr \tilde{B}^\dagger \tilde{B} = 1.
  && \textrm{(normalization)}
\end{align}
By absorbing coefficients into operators the formulation of the image
of $\psi_Q$ states can be simplified to
\begin{align}
  \label{eq:X-form}
  X = A \ot I + I \ot B
\end{align}
where
\begin{align}
  \label{eq:X-form-constraints}
  \tr A = \tr B = 0, \quad \tr A^\dagger A + \tr B^\dagger B = \frac1d.
\end{align}

Thus we have reduced the problem of \halfp{} to the following
optimization task: show that for all operators $X$ of the form
\eqref{eq:X-form} satisfying constraints \eqref{eq:X-form-constraints}
we have
\begin{align}
  \sigma_1^2 + \sigma_2^2 \le \frac12
\end{align}
where $\sigma_1$ and $\sigma_2$ are the two largest singular values of
operator $X$.

In the next section we show that this holds for normal matrices $X$
which gives a wide class of states $\phitwo$ satisfying \halfp{}.

\subsection{Half-property for states having ``normal'' projection on $Q$}

Let us first note that the operator $X$ given in equation
\eqref{eq:X-form} is normal (i.e. $X^\dagger X = XX^\dagger$)
iff operators $A$ and $B$ are normal. As
normal matrices are diagonalizable and their singular values are equal
to moduli of eigenvalues we arrive at an optimization problem over
numbers rather than matrices which we will now solve.  Namely we have

\begin{theorem}
  \label{th:Xd-halfp}
  Let $\Xd$ be a subset of normal operators $X$ of the form
  \eqref{eq:X-form} satisfying constraints
  \eqref{eq:X-form-constraints}.  Then for $d=4$ we have
  \begin{align}
    \sup_{X \in {\cal X}_d} (\sigma_1^2 + \sigma_2^2) \leq \frac12
  \end{align}
  where $\sigma_1$ and $\sigma_2$ are the two largest singular values of
  operator $X$.
\end{theorem}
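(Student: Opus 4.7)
Since $[X,X^\dagger]=[A,A^\dagger]\ot I+I\ot[B,B^\dagger]$, $X$ is normal precisely when both $A$ and $B$ are normal; one may simultaneously diagonalise each factor. Letting $a_1,\dots,a_4$ and $b_1,\dots,b_4$ denote the resulting eigenvalues, $X$ has eigenvalues $\{a_i+b_j\}_{i,j=1}^4$ and, being normal, singular values $\sigma_{ij}=|a_i+b_j|$. The constraints \eqref{eq:X-form-constraints} become $\sum_i a_i=\sum_j b_j=0$ together with $s+t=\tfrac14$, where $s:=\sum_i|a_i|^2$ and $t:=\sum_j|b_j|^2$. Proving the theorem thus reduces to showing that for any two distinct index pairs $(i_1,j_1)\ne(i_2,j_2)$,
\[
  |a_{i_1}+b_{j_1}|^2+|a_{i_2}+b_{j_2}|^2\le\tfrac12.
\]

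\textbf{Case A: $i_1\ne i_2$ and $j_1\ne j_2$.} This is immediate: apply $|x+y|^2\le 2(|x|^2+|y|^2)$ to each summand, then use $|a_{i_1}|^2+|a_{i_2}|^2\le s$ and $|b_{j_1}|^2+|b_{j_2}|^2\le t$ to obtain the upper bound $2(s+t)=\tfrac12$.

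\textbf{Case B: one shared index, say $i_1=i_2=i$.} The case $j_1=j_2$ is symmetric by exchanging the roles of $A$ and $B$. Setting $u:=(b_{j_1}+b_{j_2})/2$ and $v:=(b_{j_1}-b_{j_2})/2$, the parallelogram identity gives
\[
  |a_i+b_{j_1}|^2+|a_i+b_{j_2}|^2 = 2|a_i+u|^2+2|v|^2 \le 2(|a_i|+|u|)^2+2|v|^2.
\]
The two key dimension-four inputs both come from tracelessness: (i) the three remaining eigenvalues of $A$ sum to $-a_i$, so by Cauchy--Schwarz $|a_i|^2\le 3s/4$; (ii) the two remaining eigenvalues of $B$ sum to $-2u$, so $|b_{j_3}|^2+|b_{j_4}|^2\ge 2|u|^2$, which combined with $|b_{j_1}|^2+|b_{j_2}|^2=2|u|^2+2|v|^2$ yields $2|u|^2+|v|^2\le t/2$.

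\textbf{The final optimisation.} Writing $\alpha=|a_i|$, $\mu=|u|$, $\nu=|v|$, it remains to maximise $2(\alpha+\mu)^2+2\nu^2$ under $\alpha^2\le 3s/4$, $2\mu^2+\nu^2\le t/2$, and $s+t=\tfrac14$. Saturating $\nu^2=t/2-2\mu^2$ turns the objective into a downward parabola in $\mu\in[0,\sqrt t/2]$ with unique interior stationary point $\mu=\alpha$. When $\alpha^2\le t/4$ this point is admissible, yielding at most $4\alpha^2+t\le\min(3s,t)+t\le\tfrac38$. When $\alpha^2>t/4$ the maximum is at the boundary $(\mu,\nu)=(\sqrt t/2,0)$, giving $2\alpha^2+2\alpha\sqrt t+t/2$; substituting $\alpha^2=3s/4$ and optimising in $s$ on $s+t=\tfrac14$ produces a single-variable calculus problem whose stationary point is located at $s=\tfrac{3}{16}$, $t=\tfrac{1}{16}$, where the expression equals exactly $\tfrac12$. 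Tightness is witnessed by $a_1=3/8$, $a_2=a_3=a_4=-1/8$, $b_1=b_2=1/8$, $b_3=b_4=-1/8$, for which $|a_1+b_1|=|a_1+b_2|=\tfrac12$. The main obstacle is precisely this last boundary sub-case: both the $d=4$ bound $|a_i|^2\le 3s/4$ and the sharpened $B$-side constraint $2|u|^2+|v|^2\le t/2$ are indispensable, since dropping either of them causes the same strategy to overshoot $\tfrac12$.
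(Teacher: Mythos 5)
Your proof is correct. The opening reduction (normality of $X$ equivalent to normality of $A$ and $B$, singular values equal to $|a_i+b_j|$, and the split into the two index patterns) coincides with the paper's, as does Case A via the parallelogram inequality. Where you genuinely diverge is the shared-index case, which the paper delegates to a rather long appendix argument (Proposition \ref{pp:a1b1b2}): there the feasible set is shrunk step by step --- an equal-division lemma forcing $\tila_i=-a_1/(d-1)$ and $\tilb_i=-(\tilb_1+\tilb_2)/(d-2)$, elimination of phases, symmetrization $b_1=b_2$ --- until a one-variable calculus problem yields the exact value $(3d-4)/d^2$ for all $d\ge3$. You instead apply the parallelogram identity to recast the objective in the sum/difference variables $u,v$, extract two Cauchy--Schwarz consequences of tracelessness ($|a_i|^2\le 3s/4$ and $2|u|^2+|v|^2\le t/2$), and maximize over the resulting relaxation; the relaxation turns out to be tight, as your explicit extremal spectrum $(3/8,-1/8,-1/8,-1/8)$, $(1/8,1/8,-1/8,-1/8)$ shows (it is exactly the configuration the paper's reduction converges to). Your route is shorter and more transparent about which structural facts do the work --- both dimension-dependent inputs come from $\tr A=\tr B=0$ --- at the price of being written only for $d=4$, whereas the paper's version delivers the general-$d$ value; your constraints generalize straightforwardly ($|a_i|^2\le\frac{d-1}{d}s$ and $\frac{2(d-1)}{d-2}|u|^2+|v|^2\le t/2$) if one wanted to recover $(3d-4)/d^2$. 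Two small points worth making explicit in a final write-up: the quadratic you obtain when locating the stationary point has a spurious root $s=1/16$ introduced by squaring, so one should note that only $s=3/16$ satisfies the unsquared equation; and the endpoint values $1/8$ and $3/8$ of the single-variable function should be recorded to certify that the interior stationary point is the global maximum.
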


\begin{proof}
  Since $X$ is diagonalizable then we can replace singular values with
  moduli of eigenvalues. The latter are of the form
  \begin{align}
    \label{eq:X-form-eigenvalues}
    \lambda_{ij} = a_i + b_j
  \end{align}
  where $a_i$ and $b_j$ are eigenvalues of $A$ and $B$ respectively.
  We then have
  \begin{align}
    \sup_{X \in \Xd} (\sigma_1^2 + \sigma_2^2)
    &= \sup_{X \in \Xd} (|\lambda_1|^2 + |\lambda_2|^2) \\
    &= \sup_{X\in\Xd}
    \max_{\substack{i,j,k,l\in\{1,\ldots,d\}, \\ (i,j)\neq (k,l)}}
    \left( |a_i + b_j|^2 + |a_k + b_l|^2 \right) \\
    &= \sup_{X\in\Xd} \max \big\{ \label{eq:two-settings}
    |a_1 + b_1|^2 + |a_2 + b_2|^2, \; \nonumber \\
    &\; \phantom{=\sup_{X\in\Xd} \max \big\{}
    |a_1 + b_1|^2 + |a_1 + b_2|^2 \big\}
  \end{align}
  where $\lambda_1$ and $\lambda_2$ are two eigenvalues of $X$ with
  largest moduli.  The constraints \eqref{eq:X-form-constraints} on $X$
  imply the following constraints on $a_i$ and $b_i$
  \begin{align}
    \label{eq:aibi-constraint-1}
    &\sum_{i=1}^d {a_i} = \tr A = 0, \quad
    \sum_{i=1}^d {b_i} = \tr B = 0, \\
    \label{eq:aibi-constraint-2}
    &\sum_{i=1}^d |a_i|^2 + \sum_{i=1}^d |b_i|^2
    = \tr A^\dagger A + \tr B^\dagger B = \frac1d.
  \end{align}
  Equality \eqref{eq:two-settings} comes from the fact that there are
  two unique settings
  \begin{enumerate}
  \item $i\neq k \wedge j \neq l$ and
  \item $i = k \wedge j \neq l \vee i \neq k \wedge j = l$.
  \end{enumerate}
  In the second setting we consider only one term of the alternative
  (as under the constraints we can exchange $A$ and $B$) and in both
  settings we take arbitrary indices (as under the constraints we can
  independently permute $a_i$ and $b_i$).

  Thus to prove the theorem we have to show that the following
  inequalities hold
  \begin{align}
    \label{eq:halfp-a1a2}
     |a_1 + b_1|^2 + |a_2 + b_2|^2 &\le \frac12 \\
     \label{eq:halfp-a1a1}
     |a_1 + b_1|^2 + |a_1 + b_2|^2 &\le \frac12
   \end{align}
   under the constraints \eqref{eq:aibi-constraint-1} and
   \eqref{eq:aibi-constraint-2} with $d=4$.  The first inequality
   comes directly from the parallelogram identity
  \begin{align}
    |x+y|^2 = 2 (|x|^2 + |y|^2) - |x-y|^2 \le 2 (|x|^2 + |y|^2)
  \end{align}
  which implies
  \begin{align}
    |a_1 + b_1|^2 + |a_2 + b_2|^2
    &\le 2 (|a_1|^2 + |b_1|^2 + |a_2|^2 + |b_2|^2 )
    \nonumber\\
    &\le 2 \frac1d = \frac12.
  \end{align}
  The second inequality is much more involved and we have moved it to
  the appendix (proposition \ref{pp:a1b1b2}) where we prove that
  \begin{align}
    |a_1 + b_1|^2 + |a_1 + b_2|^2 \le \frac{3d-4}{d^2}
  \end{align}
  which for $d=4$ gives \eqref{eq:halfp-a1a1}.
\end{proof}

We are now prepared to state the main result of this section

\begin{theorem}
  For $d=4$ any rank two state $\phi_2 \in SR_2(AA':BB')$ with
  the projection on $Q$ ($Q|\phi_2\>$) isomorphic through the
  state--operator isomorphism to a normal operator satisfies the
  half-property.
\end{theorem}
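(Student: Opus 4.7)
The plan is to combine Theorem \ref{th:Xd-halfp} with Lemma \ref{lm:psi_P-form} (and more precisely its intermediate identity \eqref{eq:2P2-as-sup}) via the state--operator isomorphism. First I would dispose of the trivial case $Q|\phi_2\>=0$, where $\<\phi_2|Q|\phi_2\>=0\leq \tfrac12$ obviously holds. Otherwise, I would set
\begin{align}
|\psi^\ast\> = \frac{Q|\phi_2\>}{\|Q|\phi_2\>\|},
\end{align}
which lies in $\QH$ because $Q$ is a projector, and observe as in \eqref{eq:2P2-as-sup} that
\begin{align}
\<\phi_2|Q|\phi_2\> = \|Q|\phi_2\>\|^2 = |\<\phi_2|\psi^\ast\>|^2.
\end{align}

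Next I would invoke the bound \eqref{eq:sum-of-coeff-sqr} applied to $\psi^\ast$: since $\phi_2$ has Schmidt rank two in the cut $AA':BB'$,
\begin{align}
|\<\phi_2|\psi^\ast\>|^2 \le \mu_1^2 + \mu_2^2,
\end{align}
where $\mu_1,\mu_2$ are the two largest Schmidt coefficients of $\psi^\ast$ in that cut. Now I would pass through the state--operator isomorphism \eqref{eq:soi}: the image of $\psi^\ast\in\QH$ has the structural form \eqref{eq:X-form} with the constraints \eqref{eq:X-form-constraints}, and the Schmidt coefficients of $\psi^\ast$ coincide with the singular values of the image matrix $X^\ast$. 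Normality of $Q|\phi_2\>$ (as an operator) is preserved under nonzero rescaling, so $X^\ast$ is itself normal; hence $X^\ast\in\Xd$.

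At this point I would simply apply Theorem \ref{th:Xd-halfp} to conclude that $\sigma_1(X^\ast)^2 + \sigma_2(X^\ast)^2 \le \tfrac12$, i.e.\ $\mu_1^2+\mu_2^2\le\tfrac12$, and chain the inequalities to obtain $\<\phi_2|Q|\phi_2\>\le\tfrac12$, which is \halfp{}. There is no real obstacle here; the content of the theorem is already packaged in Theorem \ref{th:Xd-halfp}, and what remains is the translation dictionary: identifying $\psi^\ast\propto Q|\phi_2\>$ with its normal image in $\Xd$, and using the Schwarz-style bound \eqref{eq:2P2-as-sup} together with \eqref{eq:sum-of-coeff-sqr} to squeeze $\<\phi_2|Q|\phi_2\>$ between $|\<\phi_2|\psi^\ast\>|^2$ and $\mu_1^2+\mu_2^2$. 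The only item that deserves a line of justification is that the normalized vector $\psi^\ast$ still corresponds to an operator satisfying the trace and trace-norm constraints \eqref{eq:X-form-constraints}, which follows directly from $\<\psi^\ast|\psi^\ast\>=1$ and the characterization of $\QH$ derived in Section \ref{sec:normal}.
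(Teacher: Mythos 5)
Your proposal is correct and follows essentially the same route as the paper's own proof: normalize $Q|\phi_2\>$ to a state $\psi_Q\in\QH$, use $\<\phi_2|Q|\phi_2\>=|\<\phi_2|\psi_Q\>|^2\le\mu_1^2+\mu_2^2$ via \eqref{eq:sum-of-coeff-sqr}, identify the Schmidt coefficients with the singular values of the normal image $X\in\Xd$ under \eqref{eq:soi}, and conclude by Theorem \ref{th:Xd-halfp}. The remarks on the trivial case and on the constraints \eqref{eq:X-form-constraints} being satisfied after normalization match the paper's treatment.
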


\begin{proof}
  Let us assume $\<\phi_2|Q|\phi_2\>\neq0$ (otherwise the conclusion
  is obvious).  By hypothesis $\phi_2$ reaches its projection on $Q$
  on a state $|\psi_Q\> = \frac{Q |\phi_2\>}{\|Q |\phi_2\>\|} \in \QH$
  and $\psi_Q$ is isomorphic through the state--operator isomorphism
  given by \eqref{eq:soi} to a normal operator $X$.  Then using the
  fact stated in equation \eqref{eq:sum-of-coeff-sqr}, equality of
  the Schmidt coefficients of $\psi_Q$ and the singular values of operator $X$
  in the state--operator isomorphism, and theorem \ref{th:Xd-halfp} we
  obtain
  \begin{align}
    \<\phi_2|Q|\phi_2\> &= |\<\phi_2|\psi_Q\>|^2
     \le \sup_{\phi_2\in \SR_2(AA':BB')} |\<\phi_2|\psi_Q\>|^2 \\
     &= \mu_1^2 + \mu_2^2 = \sigma_1^2 + \sigma_2^2
     \le \sup_{X \in {\cal X}_d} (\sigma_1^2 + \sigma_2^2)
     \le \frac12
  \end{align}
  where $\mu_1$ and $\mu_2$ are the two largest Schmidt coefficients of
  $\psi_Q$ in the same cut in which $\phi_2$ has rank two (i.e
  $AA':BB'$) while $\sigma_1$ and $\sigma_2$ are the two largest singular
  values of operator $X$, and $\Xd$ is a subset of normal operators
  $X$ of the form \eqref{eq:X-form} satisfying constraints
  \eqref{eq:X-form-constraints}.
\end{proof}

\subsection{Characterization of states with normal projection onto $Q$}
\label{sec:matrix-C}

A more operational characterization of the states for which the above
theorem proves \halfp{} is the following.  Suppose we project $\phi_2$
state onto $\psi_+$ on subsystem $AB$.  Then the subsystem $A'B'$
should collapse to a $*$-symmetric state, i.e. a state of the form
\begin{align}
  \sum a_i |e_i\>_{A'} |e_i^*\>_{B'}.
\end{align}
The same should hold for the projection on $A'B'$.

To see it let us use the state--operator isomorphism \eqref{eq:soi}. In
our particular case it will read as follows
\begin{align}
  |\phitwo\>=(C_{AA'}\ot I_{BB'}) \psihatab\ot \psihatabprim
\end{align}
with $\hat\psi_+=\sum_i |ii\>$, or simply
\begin{align}
  |\phi_2\>=\sum_{i,i',j,j'}
  C_{ii'\, jj'} |ii'\>_{AA'}|jj'\>_{BB'}.
\end{align}
We will further write $\phi_2 \propto C$.  If for an example the matrix
$C$ is normal the corresponding state is of the form
\begin{align}
  |\phitwo\> = a|e\>_{AA'}|e^*\>_{BB'}+ b|f\>_{AA'}|f^*\>_{BB'}
\end{align}
where $e\perp f$.  Here $a$ and $b$ are eigenvalues of $C$, hence
Hermitian $C$ means that they are real, while positive $C$ matrix
means that $a$ and $b$ are nonnegative.  (We have only two terms
because $\phitwo$ is of Schmidt rank two).

Let us now examine the projection of $\phitwo$ onto $\QH$. We have
\begin{align}
  Q|\phi_2\>&=|\psi^+\>_{AB}\ot
  \left(|\tilde\phi^{(2)}\>_{A'B'}-{1\over d}\tr C\, |\psi^+\>_{A'B'}\right)
  \nonumber\\&\phantom{=}+
  \left(|\tilde\phi^{(1)}\>_{AB}-{1\over d}\tr C\, |\psi^+\>_{AB}\right)
  \ot |\psi^+\>_{A'B'}
\end{align}
where
\begin{align}
  &|\tilde\phi^{(2)}\>_{A'B'}=
  {}_{AB}\<\psi_+|\phi_2\>
  \propto {1\over d}C_{A'}\\
  \label{eq:tilde-phi-1}
  &|\tilde\phi^{(1)}\>_{AB}=
  {}_{A'B'}\<\psi_+|\phi_2\>
  \propto {1\over d} C_A
\end{align}
are unnormalized states that are obtained on one pair after projecting
second pair onto maximally entangled state $P_+$; here
$C_A=\tr_{A'}C_{AA'}$, $C_{A'}=\tr_{A}C_{AA'}$.  Let us now relate
$C_A$ and $C_{A'}$ with the matrices $A$ and $B$ from
\eqref{eq:X-form}.  Thus partial traces of matrix $C_{AA'}$ correspond
to unnormalized states that emerge after projecting one pair onto
$P_+$.

The projection of $\phi_2$ onto $\QH$ can be also written as follows
\begin{align}
  Q|\phitwo\>=|\psi^+\>_{AB}\ot |\phi^{(2)}\>_{A'B'} +
  |\phi^{(1)}\>_{AB} \ot |\psi^+\>_{A'B'}
\end{align}
where
\begin{align}
  &|\phi^{(1)}\>_{AB}=(Y_A\ot I)\psihatab\\
  \label{eq:phi^(2)}
  &|\phi^{(2)}\>_{A'B'}=(Y'_{A'}\ot I)\psihatabprim
\end{align}
with
\begin{align}
  Y={1\over d} C_A - {\tr C \over d^2}I_A; \quad
  Y'={1\over d} C_{A'} - {\tr C \over d^2}I_{A'}.
\end{align}
(Note that $Y$ and $Y'$ are traceless, which means that corresponding
vectors are orthogonal to $\psi_+$). We see that---up to a factor---$A$ is
equal to $Y$ and $B$ is equal to $Y'$.  Now since we assume that $A$
and $B$ are normal then $C_A$ and $C_{A'}$ must also be normal. This
means that e.g. $C_A$ is of the form
\begin{align}
  C_A = \sum_i c_i |e_i\>\<e_i|
\end{align}
where $c_i$ are complex numbers and $e_i$ form an orthonormal basis.
Thus the state \eqref{eq:tilde-phi-1} coming from projecting subsystem
$A'B'$ onto $P_+$ will have the desired form
\begin{align}
  \sum_i a_i |e_i\>_A|e_i^*\>_{B},
\end{align}
and similarly for projecting $AB$ part onto $P_+$.

\section{Half-property for low Schmidt rank states}
\label{sec:cdf}

In this section we show that any state which on each pair has at least
one subsystem with one-qubit support satisfies \halfp{}. To this end
we will use the notion of so called \emph{common degrees of freedom}
introduced in the following subsection.

\subsection{Half-property via ``common degrees of freedom''}

We begin with the following definition

\begin{definition}
  For a given state $\phi$ we define a set called \emph{common degrees
    of freedom} of subsystems $A$ and $B$ as
  \begin{equation}
    \operatorname{cdf}(\phi, A, B) = \{i \in \I: \<\phi|P_i|\phi\> \neq0 \}
  \end{equation}
  where $\I = \{0,\ldots,d-1\}$ and
  \begin{align}
    P_i &= |ii\>\<ii|_{AB} \ot I_{A'B'}.
  \end{align}
  We say that subsystem $A$ has at most $k$ common degrees of freedom
  with subsystem $B$ if $|\operatorname{cdf}(\phi, A, B)| \le k$.
\end{definition}

\begin{proposition}
  \label{pp:cdf-and-halfp}
  If for a given state $\phi$ subsystems $A$ with $B$ and $A'$ with
  $B'$ have at most $\frac{d}{2}$ common degrees of freedom then
  $\phi$ satisfies \halfp{}.
\end{proposition}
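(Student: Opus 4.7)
The plan is to reduce the inequality $\<\phi|Q|\phi\>\leq\tfrac12$ to the maximum of an explicit linear functional in three scalar overlaps, whose feasible region cleanly encodes the cdf hypothesis. First I would rewrite $Q=\P\ot I+I\ot\P-2\,\P\ot\P$, which gives
\[
\<\phi|Q|\phi\>=\beta_1+\beta_2-2|t|^2
\]
with $\beta_1=\<\phi|\P\ot I|\phi\>$, $\beta_2=\<\phi|I\ot\P|\phi\>$ and $t=\<\psi_+\ot\psi_+|\phi\>$. I then introduce the $K_i$-restricted normalized vectors $|\psi_+^{K_1}\>:=\tfrac{1}{\sqrt{|K_1|}}\sum_{a\in K_1}|aa\>_{AB}$ and $|\psi_+^{K_2}\>:=\tfrac{1}{\sqrt{|K_2|}}\sum_{a'\in K_2}|a'a'\>_{A'B'}$. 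The cdf hypothesis says the $|aa\>_{AB}$-component of $|\phi\>$ vanishes for $a\notin K_1$ (and similarly on $A'B'$), so $\<\psi_+|_{AB}|\phi\>=\sqrt{|K_1|/d}\,\<\psi_+^{K_1}|_{AB}|\phi\>$, and analogously on $A'B'$.

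Setting $u:=\|\<\psi_+^{K_1}|_{AB}|\phi\>\|^2$, $v:=\|\<\psi_+^{K_2}|_{A'B'}|\phi\>\|^2$, and $w:=|\<\psi_+^{K_1}\ot\psi_+^{K_2}|\phi\>|^2$, these rescalings yield
\[
\<\phi|Q|\phi\>=\frac{|K_1|\,u+|K_2|\,v}{d}-\frac{2|K_1||K_2|\,w}{d^2}.
\]
A Pythagorean decomposition of $|\phi\>$ against the four operators $|\psi_+^{K_i}\>\<\psi_+^{K_i}|$ and $I-|\psi_+^{K_i}\>\<\psi_+^{K_i}|$ on each pair produces $1=\|\phi\|^2=w+(u-w)+(v-w)+r$, where $r=\|(I-|\psi_+^{K_1}\>\<\psi_+^{K_1}|)\ot(I-|\psi_+^{K_2}\>\<\psi_+^{K_2}|)|\phi\>\|^2\geq 0$, hence the key constraint $u+v-w\leq 1$, together with $0\leq w\leq\min(u,v)$ and $u,v\leq 1$.

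The last step is to maximize the linear functional over this polytope. If $u+v\leq 1$ one may take $w=0$, and the bound $\max(|K_1|,|K_2|)/d\leq 1/2$ is immediate. Otherwise $w$ is forced to be at least $u+v-1$, and since the coefficient of $w$ is negative the optimum sits at $w=u+v-1$; substituting converts the objective to
\[
\frac{|K_1|(d-2|K_2|)\,u+|K_2|(d-2|K_1|)\,v+2|K_1||K_2|}{d^2},
\]
whose coefficients of $u,v$ are nonnegative exactly because $|K_1|,|K_2|\leq d/2$. Maximizing at $u=v=1$ gives
\[
\frac{d(|K_1|+|K_2|)-2|K_1||K_2|}{d^2}=\frac{1}{2}-\frac{(d-2|K_1|)(d-2|K_2|)}{2d^2}\leq\frac{1}{2},
\]
with equality attained by $|\phi\>=|\psi_+^{K_1}\>|\psi_+^{K_2}\>$ when $|K_1|=|K_2|=d/2$. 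The only nontrivial ingredient is recognizing the constraint $u+v-w\leq 1$ from the four-way Pythagorean decomposition; after that the cdf hypothesis enters only through the two nonnegative factors $d-2|K_i|$.
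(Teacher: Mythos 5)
Your proof is correct. The underlying mechanism is the same one the paper uses: the cdf hypothesis lets you trade the maximally entangled vector $\psi_+$ on each pair for its restriction to the cdf set $K_i$, at the cost of a factor $|K_i|/d\le 1/2$ in each overlap. Where the two arguments diverge is in the bookkeeping. The paper pads each cdf set to \emph{exactly} $d/2$ elements so that every such factor is exactly $\tfrac12$; then $Q$ contracts on $\phi$ to $\tfrac12\tilde Q$ with $\tilde Q = P_{AB}\ot I + I\ot P_{A'B'} - P_{AB}\ot P_{A'B'}$ a projector, and the bound is the one-line observation $\<\phi|\tilde Q|\phi\>\le 1$. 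You instead keep $|K_1|,|K_2|$ general and solve a three-variable linear program; note that your key constraint $u+v-w\le 1$ is precisely $\<\phi|\,P_1\ot I + I\ot P_2 - P_1\ot P_2\,|\phi\>\le 1$ in disguise, and your four-way Pythagorean decomposition is exactly one way of proving that this operator is a projector. What your version buys is a sharper, cdf-size-dependent conclusion, $\<\phi|Q|\phi\>\le \tfrac12 - (d-2|K_1|)(d-2|K_2|)/(2d^2)$ (which one checks also dominates your Case 1 bound $\max(|K_1|,|K_2|)/d$, so it holds unconditionally), strictly below $\tfrac12$ whenever both cdf sets are strictly smaller than $d/2$; the paper's padding deliberately discards this refinement in exchange for avoiding any case analysis. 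Both proofs locate the source of the hypothesis in the same place — for you it is the nonnegativity of the coefficients $d-2|K_i|$, for the paper it is the ability to fit the cdf sets inside $d/2$-dimensional subspaces so that the rescaling factor is at most $\tfrac12$.
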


\begin{proof}
  We will show that if for a given state $\phi$ subsystems $A$ with
  $B$ and $A'$ with $B'$ have at most $\frac{d}{2}$ common degrees of
  freedom then
  \begin{equation}
    \<\phi|Q|\phi\> = \frac12 \<\phi|\tilde{Q}|\phi\> \le \frac12
  \end{equation}
  where $\tilde{Q}$ is some other projector.

  Let us define
  \begin{align}
    P_d &= \frac{1}{d} \sum_{i,j \in \I} |ii\>\<jj|, \\
    P_{AB} &= \frac{2}{d} \sum_{i,j \in \I_{AB}} |ii\>\<jj|
    &&\text{with~} |\I_{AB}| = \frac{d}{2}
    \\&&&\text{and~}\nonumber
    \operatorname{cdf}(\phi,A,B)\subset \I_{AB} \subset \I
    \\
    P_{A'B'} &= \frac{2}{d} \sum_{i,j \in \I_{A'B'}} |ii\>\<jj|
    &&\text{with~} |\I_{A'B'}| = \frac{d}{2}
    \\&&&\text{and~}\nonumber
    \operatorname{cdf}(\phi,A',B')\subset \I_{AB} \subset \I
  \end{align}
  where $P_d$ is a maximally entangled state in $d \ot d$. $P_{AB}$
  and $P_{A'B'}$ are maximally entangled states on $\frac{d}{2}
  \otimes \frac{d}{2}$ subspaces chosen in such a way to contain
  common degrees of freedom of $A$ with $B$ and $A'$ with $B'$
  respectively. $\I_{AB}$ and $\I_{A'B'}$ are extensions of the sets
  of common degrees of freedom (with whatever elements) to get sets of
  exactly $\frac{d}{2}$ elements.

  One can observe that in the expression
  \begin{equation}
    \<\phi|P_d^{AB} \ot I^{A'B'} |\phi\>
  \end{equation}
  $\phi$ projects only onto those $|ii\>\<jj|$ of $P_d$ for which $i,j
  \in \operatorname{cdf}(\phi,A,B)$ by the very definition of common
  degrees of freedom, thus we can remove any of $|ii\>\<jj|$ having $i
  \notin \operatorname{cdf}(\phi,A,B)$ or $j \notin
  \operatorname{cdf}(\phi,A,B)$ in particular we can remove all those
  for which $i \notin \I_{AB}$ or $j \notin \I_{AB}$ which gives us
  \begin{equation}
    \<\phi|P_d^{AB} \ot I^{A'B'} |\phi\> =
    \<\phi|\frac12 P_{AB} \ot I^{A'B'} |\phi\>
  \end{equation}
  similar consideration for other elements of $Q$ gives us
  \begin{align}
    \<\phi|Q|\phi\>
    &= \<\phi| I \ot P_d^{A'B'} + P_d^{AB} \ot I - 2 P_d^{AB} \ot P_d^{A'B'} |\phi\>
    \\
    &= \<\phi| I \ot \frac12 P_{A'B'} + \frac12 P_{AB} \ot I
    - 2 \frac12 P_{AB} \ot \frac12 P_{A'B'} |\phi\> \\
    &= \frac12 \<\phi| I \ot P_{A'B'} + P_{AB} \ot I - P_{AB} \ot P_{A'B'} |\phi\>
    \\
    &= \frac12 \<\phi| \tilde{Q} |\phi\> \le \frac12
  \end{align}
  where $\tilde{Q}$ is also a projector thus the inequality holds.
\end{proof}

\subsection{Example: states with positive matrix $C$}

We begin by rephrasing number of \cdfs\  in terms
of the matrix $C$ of a state (see sec. \ref{sec:matrix-C})
written in block form:
\begin{align}
  C_{AA'}=\sum_{ij} |i\>_{A}\<j| \ot C^{ij}_{A'}.
\end{align}
The number of \cdfs\ is the number of blocks $C^{(ii)}$, i.e. {\it
  diagonal} blocks which do not vanish (i.e. which have at least one
nonzero element). The proposition \ref{pp:cdf-and-halfp} says that for
any given state (not necessarily of Schmidt rank two) the number of
\cdfs\ is less than or equal to $2$, then the state has \halfp.

Now suppose that $C$ is positive. Then the diagonal blocks are
positive matrices, and they do not vanish iff their trace is
nonzero. Thus the full information about the number of \cdfs\ is
contained in the partial trace of the matrix $C$:
\begin{align}
  C_{A}=\tr_{A'} C_{AA'}=\sum_{ij}\tr(C^{ij}_{A'}) |i\>_{A}\<j|
\end{align}
Thus number of \cdfs\ is equal to the number of nonzero elements on the
diagonal of $C_A$.

Now, since $Q$ is invariant over pairwise $U\ot U^*$ transformations,
we can rotate a state to diminish the number of \cdfs{} as much as possible.
If we can get $2$ or less, then we obtain \halfp.  Consider e.g. such
transformation for the pair $AB$.  The matrix $C_A$ then transforms as
$U C_AU^\dagger$.  We are interested in the minimal number of nonzero
diagonal elements under such transformations, which equals to the rank
of the matrix $C_A$.  We have then obtained, that any state with positive
matrix $C$ such that its partial trace has rank $\leq 2$, has \halfp.

Let us note however that our result of section \ref{sec:matrix-C}
implies that all Schmidt rank two states with positive matrix $C$
satisfy \halfp{}.

\subsection{Application of cdf to low Schmidt rank}

Here by use of proposition \ref{pp:cdf-and-halfp} we show that any
state which on each pair has at least one subsystem with one-qubit
support satisfies \halfp{}.

\begin{theorem}
  \label{th:sch-halfp}
  Any state $\phi$ that satisfies
  \begin{align}
    \left(
      \Sch(A:A'BB') \le \frac{d}{2} \vee
      \Sch(B:AA'B') \le \frac{d}{2}
    \right)
    \nonumber\\ \; \wedge \;
    \left(
      \Sch(A':ABB') \le \frac{d}{2} \vee
      \Sch(B':AA'B) \le \frac{d}{2}
    \right)
  \end{align}
  also satisfies \halfp{}. Here $\Sch(X:Y)$ denotes the Schmidt
  rank of the state $\phi$ in the $X$ versus $Y$ cut.
\end{theorem}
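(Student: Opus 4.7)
The plan is to deduce the theorem directly from proposition \ref{pp:cdf-and-halfp} by exploiting the fact that $Q$ is invariant under pairwise twirls of the form $U \otimes U^*$. The Schmidt rank hypotheses allow us, after an appropriate rotation of basis, to confine one subsystem of each pair to a $d/2$-dimensional subspace spanned by the first $d/2$ computational basis vectors, which in turn forces the number of common degrees of freedom on that pair to be at most $d/2$.

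First I would recall that $|\psi_+\>$ (and hence $P_+$ and $Q$) is invariant under $U \otimes U^*$, so that for any unitaries $U$ on $A$, $V$ on $A'$, the map
\begin{align}
  |\phi\> \mapsto |\phi'\> = (U_A \ot U_B^* \ot V_{A'} \ot V_{B'}^*) |\phi\>
\end{align}
preserves $\<\phi|Q|\phi\> = \<\phi'|Q|\phi'\>$, and hence preserves \halfp{}. Thus it suffices to find such $U$ and $V$ so that $\phi'$ has at most $d/2$ common degrees of freedom between $A$ and $B$, and between $A'$ and $B'$; proposition \ref{pp:cdf-and-halfp} then yields the conclusion.

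Next I would convert Schmidt-rank bounds into a ``supported-on-$d/2$-levels'' statement on one side of each pair. Suppose $\Sch(A:A'BB') \leq d/2$. Then $\rho_A = \tr_{A'BB'}|\phi\>\<\phi|$ has rank at most $d/2$, so there exists a unitary $U_A$ with $U_A \rho_A U_A^\dagger$ diagonal and supported on $\{|0\>,\dots,|d/2-1\>\}$. Applying $U_A \ot U_B^*$ on pair $AB$, the transformed state $\phi'$ satisfies $\<\phi'|(|i\>\<i|_A \ot I)|\phi'\> = 0$ for $i \geq d/2$, so
\begin{align}
  \<\phi'|\bigl(|ii\>\<ii|_{AB} \ot I_{A'B'}\bigr)|\phi'\>
  \leq \<\phi'|(|i\>\<i|_A \ot I)|\phi'\> = 0
\end{align}
for $i \geq d/2$, giving $|\operatorname{cdf}(\phi',A,B)| \leq d/2$. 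If instead $\Sch(B:AA'B') \leq d/2$, the symmetric argument, choosing $U_B$ to diagonalize $\rho_B$ and applying $U_B^{T} \ot U_B^*$ on $AB$ (so that $U_A = (U_B^*)^*{}^T = \overline{U_B^*}$ is the partner preserving $P_+$), works identically. The same reasoning applied with $V_{A'}$ or $V_{B'}$ yields $|\operatorname{cdf}(\phi',A',B')| \leq d/2$ from the second disjunction in the hypothesis.

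The last step is to assemble these two independent rotations: we can choose $U_A$ (or $U_B$) and $V_{A'}$ (or $V_{B'}$) separately because the two disjunctions concern disjoint pairs, and the global transformation $U_A \ot U_B^* \ot V_{A'} \ot V_{B'}^*$ commutes with $Q$. After this single rotation $\phi'$ has at most $d/2$ common degrees of freedom on \emph{both} pairs, and proposition \ref{pp:cdf-and-halfp} applied to $\phi'$ gives $\<\phi'|Q|\phi'\> \leq 1/2$, hence $\<\phi|Q|\phi\> \leq 1/2$. There is no real obstacle here beyond bookkeeping; the only thing to be careful about is to pair each basis change on $A$ (resp.\ $A'$) with the conjugate change on $B$ (resp.\ $B'$) so that $Q$ is genuinely left invariant.
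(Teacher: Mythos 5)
Your proposal is correct and follows essentially the same route as the paper: rotate each pair by a $U\ot U^*$ transformation (which leaves $Q$ invariant) so that the low-Schmidt-rank subsystem is supported on the first $d/2$ levels, then invoke proposition \ref{pp:cdf-and-halfp} on the common degrees of freedom. The paper phrases this via the Schmidt decomposition rather than the rank of the reduced density matrix, but the two formulations are equivalent.
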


\begin{observation}
  \label{obs:Q-invariance}
  The operator $Q$ is $U_A \ot V_{A'} \ot U_{B}^* \ot V_{B'}^*$
  invariant. (Where $U$ and $V$ are unitaries).
\end{observation}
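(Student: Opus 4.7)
The plan is to reduce the claim to the well-known invariance of the maximally entangled projector $P_+$ under the transformation $U\otimes U^*$. The only bookkeeping is to keep track of how the four tensor factors $A,A',B,B'$ are regrouped: although $Q$ is written as a tensor product of two operators, each operator lives on a \emph{pair} (either $AB$ or $A'B'$), not on an $AA'$ or $BB'$ split.

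First I would recall the identity $(U\otimes U^*)|\psi_+\> = |\psi_+\>$, valid for any unitary $U$ acting on $\mathbb{C}^d$. This follows immediately from the standard trick $(M\otimes I)|\psi_+\> = (I\otimes M^T)|\psi_+\>$, applied with $M=U$, which yields $(U\otimes U^*)|\psi_+\>=(UU^\dagger\otimes I)|\psi_+\>=|\psi_+\>$. As a consequence, on a bipartite system $XY$ the projector $\P_{XY}=|\psi_+\>\<\psi_+|$ is invariant under conjugation by $U_X\otimes U_Y^*$, and hence so is $\O_{XY}=I_{XY}-\P_{XY}$ (since $I_{XY}$ commutes with everything).

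Next I would regroup the proposed transformation to match the tensor structure of $Q$. Namely,
\begin{align}
U_A\otimes V_{A'}\otimes U_B^*\otimes V_{B'}^* \;=\; (U_A\otimes U_B^*)_{AB}\;\otimes\;(V_{A'}\otimes V_{B'}^*)_{A'B'},
\end{align}
after the canonical reordering of factors from $AA'BB'$ to $(AB)(A'B')$, which is the ordering in which $Q=\O_{AB}\otimes\P_{A'B'}+\P_{AB}\otimes\O_{A'B'}$ is naturally expressed. By the previous step, conjugation by $U_A\otimes U_B^*$ fixes both $\P_{AB}$ and $\O_{AB}$, and conjugation by $V_{A'}\otimes V_{B'}^*$ fixes both $\P_{A'B'}$ and $\O_{A'B'}$. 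Since conjugation distributes over tensor products and sums, each of the two summands in $Q$ is preserved, and therefore $Q$ itself is invariant.

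There is no real obstacle; the only subtlety to flag is that the invariance is with respect to $U\otimes U^*$ on each pair (not $U\otimes U$), which is why the statement involves the complex conjugates $U_B^*$ and $V_{B'}^*$ on the ``Bob'' sides. Everything else is a one-line computation, so the proof can be stated in a couple of lines.
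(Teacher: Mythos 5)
Your proof is correct; note that the paper states this observation without any proof at all, and your argument --- reducing to the $U\ot U^*$ invariance of $P_+$ via the transpose trick and regrouping the four factors into the $(AB)(A'B')$ tensor structure in which $Q=\O_{AB}\ot\P_{A'B'}+\P_{AB}\ot\O_{A'B'}$ is written --- is precisely the standard argument the authors evidently have in mind. One small bookkeeping slip: to pull $U^*$ off the second factor you should invoke $(M\ot I)|\psi_+\>=(I\ot M^T)|\psi_+\>$ with $M=U^\dagger$ (so that $M^T=U^*$), not with $M=U$; the resulting chain $(U\ot U^*)|\psi_+\>=(UU^\dagger\ot I)|\psi_+\>=|\psi_+\>$ is nevertheless exactly right, so the conclusion stands.
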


\begin{proof}[Proof of theorem \ref{th:sch-halfp}]
  The hypothesis may be expanded into a four-term alternative.  We
  prove the conclusion for one of the terms (for the others the proof is
  analogous).  Now suppose
  \begin{equation}
    \Sch(A:A'BB') \le \frac{d}{2} \; \wedge \;
    \Sch(A':ABB') \le \frac{d}{2}
  \end{equation}
  which means that there are Schmidt decompositions of $\phi$ of the
  form
  \begin{equation}
    |\phi\>
    = \sum_{i=0}^{d/2 - 1} a_i |\psi_i^A\> |\psi_i^{A'BB'}\>
    = \sum_{i=0}^{d/2 - 1} a'_i |\psi_i^{A'}\> |\psi_i^{ABB'}\>
  \end{equation}

  We can choose such $U$ and $V$ which transform $\phi$ to
  \begin{align}
    |\phi'\> &= U_A \ot V_{A'} \ot U_{B}^* \ot V_{B'}^*  |\phi\>
    \\&
    = \sum_{i=0}^{d/2 - 1} a_i |i^A\> |\tilde\psi_i^{A'BB'}\>
    = \sum_{i=0}^{d/2 - 1} a'_i |i^{A'}\> |\tilde\psi_i^{ABB'}\>
  \end{align}
  Now we can observe that $A$ with $B$ and $A'$ with $B'$ have at most
  $\frac{d}{2}$ degrees of freedom in common in $\phi'$ (as there are
  clearly at most $\frac{d}{2}$ degrees of freedom on $A$ and $A'$
  subsystems) thus by applying proposition \ref{pp:cdf-and-halfp} we
  have
  \begin{equation}
    \<\phi'| Q |\phi'\> \le \frac12
  \end{equation}
  and by applying observation \ref{obs:Q-invariance} we finally get
  \begin{equation}
    \<\phi| Q |\phi\> = \<\phi'| Q |\phi'\> \le \frac12.
  \end{equation}
\end{proof}

\section{Optimizing over product states and implications}
\label{sec:product}

In this section we will first consider a simpler question from the
original one.  Namely we will optimize the overlap of $Q$ with product
states rather than with Schmidt rank two ones.  This is equivalent to
optimization of the overlap of $Q^\Gamma$ with product states, where
$Q^\Gamma$ is the partial transpose of $Q$.  We find the maximal overlap with
product states for the general case of $n$ copies i.e. we will work
with $Q_n$ given by \eqref{eq:Q_n}.  Knowing the maximum over product
states, we can bound the maximum over Schmidt rank two states. For
$n=2$ we will obtain in this way
\begin{align}
  \<\phi_2| Q |\phi_2\> \le \frac34.
\end{align}
However the analysis of $n$ copy case shows that in the limit of $n\rightarrow\infty$ one
obtains a trivial result that the overlap does not exceed one.
Nevertheless this approach will be used in subsequent section to go
beyond $\frac34$.  Analysis of $Q^\Gamma$ also allows for direct proof
of \halfp{} for states with positive matrix $C$.

\subsection{Maximum overlap of product states with $Q_n$}

To find the maximum overlap of product states with $Q_n$ given by \eqref
{eq:Q_n} we will first analyze spectral decomposition of $Q_n^\Gamma$.
We have
\begin{align}
  Q_n^\Gamma
  &= \frac12\textstyle\left(I^{\ot n} - \left(
      I - \frac12 V\right)^{\ot n} \right) \\
  &= \frac12\textstyle\left(I^{\ot n} - \left(
      \frac12 P_s + \frac32 P_a\right)^{\ot n} \right) \\
  &= \sum_{i=0}^n \lambda_i A_i
\end{align}
where $P_s$ and $P_a$ are the projectors onto the symmetric and the antisymmetric
subspaces and
\begin{align}
 \lambda_i &=\frac12 \left(1 - \frac{3^i}{2^n}\right) \\
 A_i &= \sum_{l_j\in\{0,1\}, \; \sum l_j = i}
 a_{l_1}\ot \cdots \ot a_{l_n}
\end{align}
with $a_0 = P_s$ and $a_1=P_a$.  (Note that $\sum_{i=0}^n A_i =
I^{\ot{n}}$).  Thus eigenvalues of $Q_n^\Gamma$ are in decreasing
order and the largest eigenvalue $\lambda_0$ is associated with the
eigenspace $A_0=P_s^{\ot n}$.  In particular for $n=2$ we have
\begin{align}
  \lambda_0= \frac38,\, \lambda_1 = \frac18,\, \lambda_2 = - \frac58,
\end{align}
so that
\begin{align}
  Q_2^\Gamma={3\over 8} P_s\ot P_s - {5\over 8} P_a\ot P_a
  +{1\over 8}(P_a\ot P_s +P_s\ot P_a).
\end{align}
Let us now compute the maximum overlap of product states with $Q_n$.
Since $(\tr Q_n|\phi_1\>\<\phi_1|)^\Gamma=\tr Q_n^\Gamma
|\tilde\phi_1\>\<\tilde\phi_1| $, where $\tilde\phi_1$ is also a
product state (with a one-to-one correspondence between $\phi_1$ and
$\tilde\phi_1$), we can replace the optimization on $Q_n$ with
an optimization on $Q_n^\Gamma$. The overlap of product states with
$Q_n^\Gamma$ is bounded by its largest eigenvalue $\lambda_0$ and this
bound is attainable as in the eigenspace $P_s^{\ot{n}}$ corresponding to
$\lambda_0$ there are product states.  We thus have
\begin{align}
  \label{eq:r1-max}
  \sup_{\phi_1} \<\phi_1|Q_n|\phi_1\>
  = \sup_{\phi_1} \<\phi_1|Q_n^\Gamma|\phi_1\>
  = \lambda_0 = \frac12\left(1 - \frac1 {2^n}\right).
\end{align}
In particular for two copies this gives $\frac38$.

\subsection{Bound for $\tQt$ in terms of $\pQp$}

As Schmidt rank two state may be decomposed to
\begin{align}
  |\phi_2\> = \sqrt {p} |\phi_1\> + \sqrt {1-p}|\phi_1^\bot\>,
\end{align}
we observe that
\begin{align}
  \nonumber
  &\sup_{\phi_2} \<\phi_2|Q|\phi_2\> \\
  &= \sup_{\phi_1,\phi_1^\bot,p}
  (\sqrt {p} \<\phi_1| + \sqrt {1-p}\<\phi_1^\bot|) Q
  (\sqrt {p} |\phi_1\> + \sqrt {1-p}|\phi_1^\bot\>) \\
  \nonumber
  &= \sup_{\phi_1,\phi_1^\bot,p}
  p \pQp + (1-p) \oQo \\&\phantom{+} + 2\sqrt {p (1 - p)} \re\pQo \\
  \label {eq:sup:pQp+pQo}
  &\le \sup_{\phi_1,\phi_1^\bot} (\pQp + |\pQo|)
\end{align}
and thus from Schwarz inequality
\begin{align}
  \sup_{\phi_2} \tQt \le 2\sup_{\phi_1} \pQp.
\end{align}
In this way we have obtained the bound for the overlap of the Schmidt rank two
states with $Q$ in terms of optimal overlap with product states.  This
is also true for any other projector, in particular, for $Q_n$.

Thus for two copies we obtain the following bound
\begin{align}
  \sup_{\phi_2} \tQt \le \frac34.
\end{align}
Unfortunately this method does not lead to any bound that would hold
for all $n$ apart from the trivial bound $\<\phi_2|Q_n|\phi_2\>\le 1$.

\subsection{The form of the rank-one states attaining maximum on $Q_n$}

It is interesting that the product states attaining the maximum on $Q_n$
must be of a very specific form.  For $n=2$ the partial transpose of such
state (which is again a legitimate state) must belong to a subspace
$P_s^{AB} \ot P_s^{A'B'}$.  One can then find that the states that are
product with respect to $AA':BB'$ cut and the same time belong to the
above subspace must be of the form
\begin{align}
  |xx\>_{AB} \ot |yy\>_{A'B'}.
\end{align}
It then follows that a product state maximizing overlap with $Q_n$
must be of the form
\begin{align}
  |xx^*\>_{AB} \ot |yy^*\>_{A'B'}.
\end{align}

This observation in general case of $n$ copies is contained in the
following.

\begin{proposition}
  \label{pp:r1-form}
  For any $n$ all rank-one states $\phi_1$ reaching maximum on $Q_n$
  has the form
  \begin{align}
    \label{eq:form}
    |\phi_1\> = \bigotimes_{i=1}^n |\psi_i\>_{A_i}|\psi_i^*\>_{B_i}.
  \end{align}
\end{proposition}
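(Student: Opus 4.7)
The plan is to exploit the partial-transpose identity used in deriving \eqref{eq:r1-max}, which rewrites $\<\phi_1|Q_n|\phi_1\>$ as $\<\tilde\phi_1|Q_n^\Gamma|\tilde\phi_1\>$, with $|\tilde\phi_1\>$ obtained from $|\phi_1\>$ by complex-conjugating the Bob factor of every pair in the fixed computational basis (so, if $|\phi_1\> = \bigotimes_i |\mu_i\>_{A_i}|\nu_i\>_{B_i}$ then $|\tilde\phi_1\> = \bigotimes_i |\mu_i\>_{A_i}|\nu_i^*\>_{B_i}$). From the spectral decomposition $Q_n^\Gamma = \sum_i \lambda_i A_i$ established earlier, $\lambda_0 = \frac12(1-2^{-n})$ is strictly the largest eigenvalue, with eigenspace $A_0 = \bigotimes_{i=1}^n P_s^{A_iB_i}$. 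Hence any maximizer $|\tilde\phi_1\>$ must lie entirely in $A_0$, i.e.\ must be symmetric on every individual pair.

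It then suffices to prove the structural claim: any vector that is a product across the $A:B$ cut and lies in $\bigotimes_{i=1}^n P_s^{A_iB_i}$ factorizes as $\bigotimes_{i=1}^n |\psi_i\>_{A_i}|\psi_i\>_{B_i}$. Writing $|\tilde\phi_1\> = |\alpha\>_A|\beta\>_B$, its reduction to the first pair is the product $\rho_{A_1}\otimes\rho_{B_1}$. Invariance of $|\tilde\phi_1\>$ under the swap $V_{A_1B_1}$ forces $\rho_{A_1}=\rho_{B_1}=:\rho$ and supports $\rho\otimes\rho$ inside $P_s^{A_1B_1}$. The key small lemma is that $\rho\otimes\rho$ supported in the symmetric subspace forces $\rho$ to be rank one: indeed, if $\rho$ had two orthogonal eigenvectors $|a_k\>,|a_l\>$ of nonzero weight, then $|a_k a_l\>$ would lie in the range of $\rho\otimes\rho$ yet would not be symmetric, a contradiction. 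Thus $\rho = |\psi_1\>\<\psi_1|$, and consequently $|\alpha\> = |\psi_1\>_{A_1}|\alpha'\>$, $|\beta\> = |\psi_1\>_{B_1}|\beta'\>$, so the first pair splits off as $|\psi_1\>_{A_1}|\psi_1\>_{B_1}$.

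The residual factor $|\alpha'\>|\beta'\>$ is again a product across the remaining $A:B$ cut and lies in $\bigotimes_{i=2}^n P_s^{A_iB_i}$, so the same argument applies by induction on $n$, yielding $|\tilde\phi_1\> = \bigotimes_{i=1}^n |\psi_i\>_{A_i}|\psi_i\>_{B_i}$. Undoing the Bob-side conjugation gives $|\phi_1\> = \bigotimes_{i=1}^n |\psi_i\>_{A_i}|\psi_i^*\>_{B_i}$, the asserted form. The only conceptually nontrivial step is the rank-one lemma for product states in the symmetric subspace; the rest is bookkeeping of the partial-transpose correspondence and a pair-by-pair induction.
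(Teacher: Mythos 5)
Your argument is correct, and its skeleton coincides with the paper's: pass to $Q_n^\Gamma$ via the product-state partial-transpose correspondence, note that $\lambda_0=\frac12(1-2^{-n})$ is \emph{strictly} the largest eigenvalue so a maximizer must lie entirely in the eigenspace $P_s^{\ot n}$, and then show by induction over pairs that a vector which is product across the $A{:}B$ cut and lies in $\bigotimes_i P_s^{A_iB_i}$ factorizes as $\bigotimes_i|\psi_i\>|\psi_i\>$. Where you genuinely differ is in how the inductive step is carried out. The paper writes the Schmidt decomposition of the candidate state across the ($n$ pairs) versus (one pair) cut, expands $\<\phi_1|P_s^{\ot (n+1)}|\phi_1\>$ termwise, and forces every matrix element to equal $1$, which collapses the decomposition to a single term; this is correct but terse, and the bookkeeping of which systems carry which Schmidt vectors is a bit delicate. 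You instead reduce to a single pair: product structure across $A{:}B$ makes the pair's reduced state $\rho_{A_1}\ot\rho_{B_1}$, swap invariance (from membership in $P_s^{A_1B_1}$) forces $\rho_{A_1}=\rho_{B_1}=\rho$, and your rank-one lemma --- that $\rho\ot\rho$ supported in the symmetric subspace forces $\rho$ pure, since otherwise $|a_k a_l\>$ with $a_k\perp a_l$ lies in the range but not in $P_s$ --- peels off the pair as $|\psi_1\>|\psi_1\>$. This local, reduced-density-matrix version of the inductive step is cleaner and more self-contained than the paper's computation; what the paper's version buys is that it works directly with the overlap $\<\phi_1|P_s^{\ot(n+1)}|\phi_1\>$ and so sits closer to the optimization that motivates the proposition. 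Both are complete proofs; one small point worth stating explicitly in yours is that after splitting off the first pair the residual $|\alpha'\>|\beta'\>$ indeed still lies in $\bigotimes_{i\ge2}P_s^{A_iB_i}$ (which follows by contracting the first pair with $\<\psi_1\psi_1|$), so the induction is legitimate.
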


\begin{proof}
  The thesis of the proposition is equivalent to the following
  statement: for any $n$ all rank-one states $\phi_1$ reaching maximum
  on $Q_n^\Gamma$ have the form
  \begin{align}
    \label{eq:gamma-form}
    |\phi_1\> = \bigotimes_{i=1}^n |\psi_i\>_{A_i}|\psi_i\>_{B_i}.
  \end{align}
  We prove it by induction.
  \begin{enumerate}
  \item For $n=1$ only rank one states of the form $|\psi\psi\>$ reach
    maximum on $Q_1^\Gamma=\frac{1}{4}V$.
  \item Suppose for some $n$ maximal projection of rank one state on
    $Q_n^\Gamma$ requires the from \eqref{eq:gamma-form}. From
    previous section a rank one state $\phi_1$ defined on $n+1$ pairs
    to attain maximum on $Q_{n+1}^\Gamma$ must be an eigenstate of
    $P_s^{\otimes n+1}$ which is a subspace of the symmetric space on
    $n+1$ pairs.  Thus the Schmidt decomposition of $\phi_1$ in $n$ pairs
    versus single pair cut ($AB:ab$) has the form
    \begin{align}
      |\phi_1\> = |\psi\>_{Aa}|\psi\>_{Bb}
      = \sum a_i a_j |\psi_i\psi_j\>_{AB}|\phi_i\phi_j\>_{ab}
    \end{align}
    and we have
    \begin{align}
      \nonumber
      &\<\phi_1|P_s^{\otimes n+1}|\phi_1\> \\
      &= \sum a_i a_j a_k a_l \<\psi_i\psi_j|P_s^{\otimes n}|\psi_k\psi_l\>
      \<\phi_i\phi_j|P_s|\phi_k\phi_l\> \\
      &= \sum a_i a_j a_k a_l \<\psi_i\psi_j|P_s^{\otimes n}|\psi_k\psi_l\>
      \frac12 (\delta_{ik}\delta_{jl} + \delta_{il}\delta_{jk})
    \end{align}
  \end{enumerate}
  to obtain one above all the projections must be equal to 1.  For
  projection on $P_s$ given in delta-form requires $i=j=k=l$ and it is
  always one only if $\phi_1$ is product in $AB:ab$ cut. To obtain one
  on $P_s^{\otimes n}$ the $\psi_i\otimes\psi_i$ state must be of the
  form \eqref{eq:gamma-form} and thus $\phi_1$ is of the form
  \eqref{eq:gamma-form}.
\end{proof}

\subsection{Superpositions of rank-one states with maximum on $Q_n$}

One could expect that superpositions of rank-one states with maximum
on $Q_n$ has the \halfp{} as such rank-one states are product
between the copies.  Indeed this is the case, their overlap with $Q_n$
is analyzed in the following

\begin{proposition}
  Let $d=4$ and $\phi_1$, $\phi_1^\bot$ be $n$-copy orthogonal product
  states with maximum overlap with $Q_n$, i.e. of the form
  \begin{align}
    |\phi_1\>
    = \bigotimes_{i=1}^n |\psi_i\>_{A_i}|\psi_i^*\>_{B_i}, \quad
    |\phi_1^\bot\>
    = \bigotimes_{i=1}^n |\tilde\psi_i\>_{A_i}|\tilde\psi_i^*\>_{B_i}
  \end{align}
  then their superposition
  \begin{align}
    |\phi_2\> = \sqrt{p} |\phi_1\> + \sqrt{1-p} |\phi_1^\bot\>
  \end{align}
  has the following overlap with $Q_n$
  \begin{align}
    \label{eq:r1-max-superpos}
    \<\phi_2|Q_n|\phi_2\>
    = \frac12 \left(1 - \frac1{2^n}\right)
    - \sqrt{p (1 - p)} \prod_{i=1}^n \left(
      |\<\psi_i|\tilde\psi_i\>|^2 - \frac12\right).
  \end{align}
  In particular it is equal to $\frac12$ only if $p=\frac12$ and
  $\phi_1$, $\phi_1^\bot$ are orthogonal on an odd number of copies
  and equal on the rest. Otherwise it is less than $\frac12$.
\end{proposition}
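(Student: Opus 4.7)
\medskip

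The plan is to expand $\<\phi_2|Q_n|\phi_2\>$ into the diagonal and cross contributions, and to evaluate the cross term explicitly by exploiting the closed form $Q_n = \tfrac12\bigl(I^{\otimes n} - (I-\tfrac{d}{2}P_+)^{\otimes n}\bigr)$ from equation \eqref{eq:Q_n}. Writing
\begin{align*}
  \<\phi_2|Q_n|\phi_2\>
  = p\<\phi_1|Q_n|\phi_1\> &+ (1-p)\<\phi_1^\bot|Q_n|\phi_1^\bot\> \\
  &+ 2\sqrt{p(1-p)}\,\re\<\phi_1|Q_n|\phi_1^\bot\>,
\end{align*}
the two diagonal terms are both equal to $\tfrac12(1-2^{-n})$ by \eqref{eq:r1-max}, since $\phi_1$ and $\phi_1^\bot$ are product states of the form \eqref{eq:form} that attain the maximum in Proposition~\ref{pp:r1-form}.

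The next step is to compute the cross matrix element. Using the tensor product structure of $\phi_1$ and $\phi_1^\bot$ together with the factorization of $(I-\tfrac{d}{2}P_+)^{\otimes n}$, I would reduce it to a product of single-copy matrix elements
\begin{align*}
  \<\psi_i\psi_i^*|\,I-{\textstyle\frac{d}{2}}P_+\,|\tilde\psi_i\tilde\psi_i^*\>.
\end{align*}
The key elementary calculation is $\<\psi_+|\tilde\psi\tilde\psi^*\> = \tfrac{1}{\sqrt d}\sum_i \<i|\tilde\psi\>\<i|\tilde\psi^*\> = \tfrac{1}{\sqrt d}$ (and the conjugate identity), so $\<\psi\psi^*|P_+|\tilde\psi\tilde\psi^*\> = \tfrac{1}{d}$, while $\<\psi\psi^*|\tilde\psi\tilde\psi^*\> = |\<\psi|\tilde\psi\>|^2$. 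This gives
\begin{align*}
  \<\psi_i\psi_i^*|\,I-{\textstyle\frac{d}{2}}P_+\,|\tilde\psi_i\tilde\psi_i^*\>
  = |\<\psi_i|\tilde\psi_i\>|^2 - \tfrac12.
\end{align*}
Since $\<\phi_1|\phi_1^\bot\> = \prod_i |\<\psi_i|\tilde\psi_i\>|^2 = 0$ by orthogonality, the $I^{\otimes n}$ part of $Q_n$ contributes nothing, and hence
\begin{align*}
  \<\phi_1|Q_n|\phi_1^\bot\>
  = -\tfrac12\prod_{i=1}^{n}\bigl(|\<\psi_i|\tilde\psi_i\>|^2 - \tfrac12\bigr),
\end{align*}
which is automatically real. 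Combining with the diagonal contributions yields formula \eqref{eq:r1-max-superpos}.

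For the second part, I would argue by bounding magnitudes. Each factor $|\<\psi_i|\tilde\psi_i\>|^2 - \tfrac12$ lies in $[-\tfrac12,\tfrac12]$, and $\sqrt{p(1-p)} \le \tfrac12$, so the correction term has absolute value at most $2^{-(n+1)}$, meaning $\<\phi_2|Q_n|\phi_2\>\le\tfrac12$ with equality only when (a) $p=\tfrac12$, (b) every $|\<\psi_i|\tilde\psi_i\>|^2\in\{0,1\}$, and (c) the product of the signs of the factors equals $-1$. Condition (b) means each pair is either equal (up to a phase) or orthogonal; orthogonality of $\phi_1$ and $\phi_1^\bot$ forces at least one orthogonal pair; and (c), combined with the explicit sign $(-1)^k/2^n$ of the product when $k$ indices are orthogonal, requires $k$ odd. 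This matches the claimed characterization. I do not anticipate a genuinely hard step here: the only subtlety is keeping track of the sign in the cross term and checking carefully the equality conditions of the bound, particularly ensuring orthogonality is compatible with $p=\tfrac12$ and the odd parity constraint.
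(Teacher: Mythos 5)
Your proof is correct and follows essentially the same route as the paper: decompose the overlap into diagonal terms (each equal to $\frac12(1-2^{-n})$) plus a cross term, reduce the cross term to the single-copy matrix elements $\<\psi_i\psi_i^*|(I-\frac{d}{2}P_+)|\tilde\psi_i\tilde\psi_i^*\>=|\<\psi_i|\tilde\psi_i\>|^2-\frac12$, and then analyze the equality conditions of the resulting bound. The only (cosmetic) difference is that you obtain the product formula for $\<\phi_1|Q_n|\phi_1^\bot\>$ directly from the closed form \eqref{eq:Q_n} together with $\<\phi_1|\phi_1^\bot\>=0$, whereas the paper derives the same identity by induction via the recursion \eqref{eq:Qn-recursive}; your version is, if anything, slightly more streamlined.
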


\begin{proof}
  The form of $\phi_1$ and $\phi_1^\bot$ comes from proposition
  \ref{pp:r1-form} and their overlap with $Q_n$ from \eqref{eq:r1-max}
  thus we have
  \begin{align}
    \<\phi_2|Q_n|\phi_2\>
    = \frac12 \left(1 - \frac1{2^n}\right)
    + 2 \sqrt{p (1 - p)} \re\<\phi_1|Q_n|\phi_1^\bot\>
  \end{align}
  Thus to finish the proof we will show by induction that
  \begin{align}
    \label{eq:r1-max-coh}
    \<\phi_1|Q_n|\phi_1^\bot\>
    = -\frac12 \prod_{i=1}^n \left(
      |\<\psi_i|\tilde\psi_i\>|^2 - \frac12\right)
  \end{align}
  It is true for $n=1$
  \begin{align}
    \<\phi_1|Q_1|\phi_1^\bot\>
    = \frac1d \<\psi_1\psi_1^\bot|V|\psi_1^\bot\psi_1\>
    = \frac1d = - \frac12 (0 - \frac12).
  \end{align}
  Suppose it is true for some $n$, let us show it also holds for
  $n+1$. Without loss of generality we can assume $\phi_1$ and
  $\phi_1^\bot$ are orthogonal on one of the first $n$ copies thus we
  can write
  \begin{align}
    |\phi_1\> = |\phi\>|\psi\psi^*\>, \quad
    |\phi_1^\bot\> = |\phi^\bot\>|\tilde\psi\tilde\psi^*\>.
  \end{align}
  Then by using recursive formula \eqref{eq:Qn-recursive} we have
  \begin{align}
    \nonumber
    &\<\phi_1|Q_{n+1}|\phi_1^\bot\> \\
    &= \<\phi|Q_n|\phi^\bot\> \left(
      \<\psi\psi^*|\tilde\psi\tilde\psi^*\>
      - 2\<\psi\psi^*|Q_1|\tilde\psi\tilde\psi^*\>
    \right) \\
    &= - \frac12 \prod_{i=1}^n \left(
      |\<\psi_i|\tilde\psi_i\>|^2 - \frac12
    \right)\left(
      |\<\psi|\tilde\psi\>|^2
      - \frac2d\<\psi\tilde\psi|V|\tilde\psi\psi\>
    \right) \\
    &= - \frac12 \prod_{i=1}^{n+1} \left(
      |\<\psi_i|\tilde\psi_i\>|^2 - \frac12\right).
  \end{align}
  It is evident that to maximize \eqref{eq:r1-max-superpos},
  i.e. obtain $\frac12$, one needs $p=\frac12$ and
  \eqref{eq:r1-max-coh} equal to $2^{-(n+1)}$.  This requires
  $\left||\<\psi_i|\tilde\psi_i\>|^2-\frac12\right|=\frac12$ for all
  $i$, that is $\psi_i$ and $\tilde\psi_i$ must be equal or orthogonal
  and further for \eqref{eq:r1-max-coh} to be positive they must be
  orthogonal on odd number of copies and equal on the rest.
\end{proof}

\subsection{Digression: half-property for a class of states $\phi_2$ via $Q^\Gamma$}

We consider the following class of states
\begin{align}
  \label{eq:foo-2}
  |\phitwo\>=a|e_1\>|e_1^*\>+b|e_2\>|e_2^*\>
\end{align}
with $a,b\geq 0$, $|e_1\>\perp |e_2\>$.  In the state--operator
isomorphism they correspond to positive matrices $C_{AA'}$ (see sect
\ref{sec:normal}).  Then $C_A$ and $C_{A'}$ are also positive, hence
normal, so that it is a subclass of states for which we have proved
\halfp{} in section \ref {sec:normal}.  Here we present another proof
for this class of states \eqref{eq:foo-2}.  (In section \ref{sec:ent}
we present a third proof, which uses principle of noincreasing
entanglement by LOCC).

We can write
\begin{align}
  \<\phitwo|Q|\phitwo\>=\tr (Q^\Gamma P_\phitwo^\Gamma)
\label{eq:q-gamma-p}
\end{align}
with $P_\phitwo=|\phitwo\>\<\phitwo|$.
We have
\begin{align}
  P_\phitwo^\Gamma
  =a^2 P_{|e_1\>|e_1\>}+ b^2 P_{|e_2\>|e_2\>}
  +ab(P_{\psi_+}-P_{\psi_-})
\end{align}
with
\begin{align}
  |\psi_\pm\>={1\over \sqrt2}(|e_1\>|e_2\>\pm|e_2\>|e_1\>).
\end{align}
Now recall that
\begin{align}
  Q^\Gamma={3\over 8} P_s\ot P_s - {5\over 8} P_a\ot P_a
  +{1\over 8}(P_a\ot P_s +P_s\ot P_a).
\label{eq:q-gamma}
\end{align}
Note that vectors $|e_1\>|e_1\>$, $|e_2\>|e_2\>$ as well as $\psi_+$
lie in the symmetric subspace i.e. $P_s\ot P_s +P_a\ot P_a$, while
$\psi_-$ lies in the antisymmetric subspace $P_s\ot P_a +P_a\ot P_s$.
Therefore, one can estimate the expression (\ref{eq:q-gamma-p}) from
above, by assuming, that triplet states lie solely within $P_s \ot
P_s$, obtaining
\begin{align}
  \<\phitwo|Q|\phitwo\>=\tr (Q^\Gamma P_\phitwo^\Gamma)\leq
  {3\over 8}(a^2 + b^2 +ab) -{1\over 8} ab\leq \frac12.
\end{align}

\section{Bounds for maximal overlap with Q for all
  states $\phitwo$.}
\label{sec:bounds}

In this section we show that we can improve the bound obtained by
means of product states in the previous section.

\subsection{Strictly less than 3/4}

In the previous section we have provided the following bound
\begin{align}
  \label{eq:le-34}
  \sup_{\phi_2} \<\phi_2|Q|\phi_2\> \le \frac34.
\end{align}
Let us now show that the bound cannot be tight. To this end assume
that we have equality.  Let us recall the bound of
\eqref{eq:sup:pQp+pQo} on the overlap of rank two states with $Q$
\begin{align}
  \label{eq:bound-by-r1+coh}
  \sup_{\phi_2} \<\phi_2|Q|\phi_2\>
  \le \sup_{\phi_1,\phi_1^\bot} (\pQp + |\pQo|).
\end{align}
Our assumption thus implies that RHS $\ge\frac34$.  As
$\pQp\le\frac38$ this requires
\begin{align}
  \label{eq:ge-frac38}
  |\re\pQo| \ge \frac38
\end{align}
and by Schwarz inequality both $\phi_1$ and $\phi_1^\bot$ must have
maximal projection on $Q$ which through proposition \ref{pp:r1-form}
implies they must be of the form $|xx^*\>_{AB}|yy^*\>_{A'B'}$. However
for two such orthogonal states by direct calculations we obtain
\begin{align}
  |\re\<\phi_1|Q|\phi_1^\bot\>| \le \frac18
\end{align}
which is in contradiction with \eqref{eq:ge-frac38} and hence with our
assumption of equality in \eqref{eq:le-34}.  Thus we obtain
\begin{align}
  \sup_{\phi_2} \<\phi_2|Q|\phi_2\> < \frac34.
\end{align}

Numerical optimization suggests the bound \eqref{eq:bound-by-r1+coh}
is actually equal to $17 \over 32$.  If we want to optimize
independently both terms of the bound \eqref{eq:bound-by-r1+coh} we
get
\begin{align}
  \label{eq:bound-38+}
 \sup_{\phi_2} \<\phi_2|Q|\phi_2\>
 \le \frac38 + \sup_{\phi_1,\phi_1^\bot} |\pQo|
\end{align}
which numerically gives $5 \over 8$.  At the moment we do not
have analytical proofs of these estimates.

\subsection{Beyond 3/4}

We have seen that product states attaining maximum overlap with $Q$
have to be of the form $|\phi\>=|x\>_A|x^*\>_B|y\>_{A'}|y^*\>_{B'}$,
i.e. the partial transpose of $\phi$ belongs to the product of symmetric
subspaces.  From continuity, if the overlap of $\phi$ with $Q$ is
close to maximal, the state $\phi$ should have big overlap with states
of the above form.  Here we provide quantitative estimate.  First we
will show that in such case $\phi$ has big overlap with $P_s\ot P_s$:

\begin{lemma}
  For states $\phi$ product with respect to $AA':BB'$ cut we have
  \begin{align}
    \<\phi|P_s^{AB}\ot P_s^{A'B'}|\phi\>\geq
    4 \<\phi^\Gamma|Q|\phi^\Gamma\> -{1\over 2},
  \end{align}
  where action $\Gamma$ is well defined because $\phi$ is product.
\end{lemma}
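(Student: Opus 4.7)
The approach I would take is to rewrite the quantity $\<\phi^\Gamma|Q|\phi^\Gamma\>$ as $\<\phi|Q^\Gamma|\phi\>$ (which is well-defined once $\phi$ is product across $AA':BB'$, so that $\phi^\Gamma$ is a legitimate pure state, and the partial transpose may be moved from the state to the operator), and then expand everything in the four mutually orthogonal projectors $P_s\ot P_s$, $P_s\ot P_a$, $P_a\ot P_s$, $P_a\ot P_a$ on the bipartition $AB:A'B'$. The spectral form of $Q^\Gamma$ already stated in the paper in equation \eqref{eq:q-gamma} gives us
\begin{align}
4\<\phi|Q^\Gamma|\phi\>={3\over 2}p_{ss}-{5\over 2}p_{aa}+{1\over 2}(p_{sa}+p_{as}),
\end{align}
where $p_{xy}=\<\phi|P_x^{AB}\ot P_y^{A'B'}|\phi\>$ for $x,y\in\{s,a\}$.

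Next, because $P_s+P_a=I$ on each pair, the four quantities satisfy the normalization $p_{ss}+p_{sa}+p_{as}+p_{aa}=1$. The desired inequality then reduces to
\begin{align}
p_{ss}-4\<\phi|Q^\Gamma|\phi\>+{\textstyle\frac12}\geq 0,
\end{align}
and substituting the normalization to rewrite the constant $\frac12$ as $\frac12(p_{ss}+p_{sa}+p_{as}+p_{aa})$, all the $p_{ss}$, $p_{sa}$ and $p_{as}$ contributions cancel and one is left simply with $3p_{aa}\geq 0$, which is automatic since $p_{aa}$ is the expectation of a projector.

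I do not expect a real obstacle here: the only thing that needs a moment of care is the justification of moving $\Gamma$ across, which uses the producthood assumption to keep $\phi^\Gamma$ a unit vector, plus the fact that $P_s\ot P_s$, etc.\ are invariant under $\Gamma$ on each pair at the level of the operator identity (so that expressing $\<\phi^\Gamma|Q|\phi^\Gamma\>$ as $\tr(Q^\Gamma|\phi\>\<\phi|)$ is unambiguous). After that, the whole proof is a one-line linear-algebra identity in the four numbers $p_{ss},p_{sa},p_{as},p_{aa}$, so nothing deep is needed beyond the explicit decomposition of $Q^\Gamma$ already given in \eqref{eq:q-gamma}.
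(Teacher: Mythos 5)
Your proof is correct and is essentially the paper's own argument: the paper's proof consists only of the remark that the lemma ``follows from the formula \eqref{eq:q-gamma} and a bit of algebra,'' and your expansion in the four projectors $P_x^{AB}\ot P_y^{A'B'}$ together with the normalization $p_{ss}+p_{sa}+p_{as}+p_{aa}=1$, reducing everything to $3p_{aa}\geq 0$, is exactly that algebra spelled out. The justification for moving $\Gamma$ from the state to the operator via producthood is also the intended one.
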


\begin{proof}
  It follows from the formula (\ref{eq:q-gamma}) and a bit of algebra.

  We then have that large overlap of a product state $\phi$ with $P_s\ot
  P_s$ implies large overlap with vectors of the form $|xxyy\>$.
\end{proof}

\begin{lemma}
  For all states $\phi$ product with respect to $AA':BB'$ cut we have
  \begin{align}
    \sup_{x,y}|\<\phi|xx\>_{AB}|yy\>_{A'B'}|^2 \geq
    4\<\phi|P^s_{AB}\ot P^s_{A'B'}|\phi\>-3.
  \end{align}
\end{lemma}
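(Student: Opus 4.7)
The plan is to use the product structure $|\phi\>=|\psi\>_{AA'}\ot|\chi\>_{BB'}$ to factorise both sides, then exhibit an explicit $|x\>|y\>$ realising the bound. Reordering tensor factors gives $\<\phi|xx\>_{AB}|yy\>_{A'B'}=\<\psi|xy\>_{AA'}\<\chi|xy\>_{BB'}$, so the left-hand side equals $\sup_{x,y}|\<\psi|xy\>|^2|\<\chi|xy\>|^2$. For the right-hand side, expanding $P_s^{AB}\ot P_s^{A'B'}=\tfrac14(I+V^{AB})(I+V^{A'B'})$ and evaluating on $|\phi\>$ with the standard swap identities gives
\begin{equation*}
  \<\phi|P_s^{AB}\ot P_s^{A'B'}|\phi\>=\frac{1+\alpha+\beta+\gamma}{4},
\end{equation*}
where $\alpha=\tr(\rho_A^\psi\rho_B^\chi)$, $\beta=\tr(\rho_{A'}^\psi\rho_{B'}^\chi)$, and $\gamma=|\<\psi|\chi\>|^2$, each in $[0,1]$. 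The target inequality thus reduces to
\begin{equation*}
  \sup_{x,y}|\<\psi|xy\>|^2|\<\chi|xy\>|^2\ge\alpha+\beta+\gamma-2.
\end{equation*}

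Since the left-hand side is non-negative, the bound is trivial whenever $\alpha+\beta+\gamma\le 2$. In the nontrivial regime, each of $\alpha,\beta,\gamma$ must be close to $1$: $\gamma\to 1$ forces $\psi\simeq\chi$ up to a phase, and then $\alpha,\beta\to 1$ force $\rho_A^\psi,\rho_{A'}^\psi$ to be nearly pure, so that $\psi\simeq\chi\simeq|x_0\>_A|y_0\>_{A'}$ for some unit vectors $x_0,y_0$. This pinpoints the candidate $(x,y)=(x_0,y_0)$ and, more generally, suggests taking $|x\>|y\>$ to be the top Schmidt pair of $|\psi\>+e^{i\theta}|\chi\>$ on the $A:A'$ cut, for the phase $\theta$ maximising the resulting joint overlap. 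One can also view $\<\psi|xy\>\<\chi|xy\>=\<\sigma|vv\>$ with $\sigma=\tfrac12(|\psi\>|\chi\>+|\chi\>|\psi\>)$ a rank-two symmetric vector in $\mathrm{Sym}^2(\mathbb C^d\ot\mathbb C^d)$ and $v=|x\>|y\>$; a Takagi decomposition of the associated complex-symmetric $d^2\times d^2$ matrix yields the \emph{unrestricted} supremum $\sup_{v\in\mathbb C^d\ot\mathbb C^d}|\<\sigma|vv\>|^2=(1+\sqrt\gamma)^2/4$, attained at a vector in $\mathrm{span}\{\psi,\chi\}$.

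It remains to show that restricting to \emph{product} $v=|x\>|y\>$ still delivers the bound $\alpha+\beta+\gamma-2$. I would expand $|\<\psi|xy\>|^2|\<\chi|xy\>|^2$ on the candidate $(x,y)$ in terms of the Schmidt coefficients of $\psi$ and $\chi$ and their cross-overlaps, reducing the claim to a polynomial inequality in those variables. The cleanest check is the rank-one subcase $\psi=|x_0y_0\>$, $\chi=|u_0v_0\>$, where the bound takes the explicit form $\tfrac{1}{16}((1+\sqrt p)(1+\sqrt q))^2\ge p+q+pq-2$ with $p=|\<x_0|u_0\>|^2$, $q=|\<y_0|v_0\>|^2$, a low-degree inequality on $[0,1]^2$ saturated at $p=q=1$. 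The main obstacle is the product-state restriction: the Takagi bound is attained at a generically entangled vector, so the work is to show that the best product $|x\>|y\>$ in $\mathrm{span}\{\psi,\chi\}$ still gives $\alpha+\beta+\gamma-2$ across arbitrary Schmidt spectra of $\psi$ and $\chi$.
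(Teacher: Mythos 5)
Your setup coincides exactly with the paper's first step: writing $|\phi\rangle=|\psi\rangle_{AA'}\otimes|\chi\rangle_{BB'}$, both you and the paper derive the identity $\langle\phi|P_s^{AB}\otimes P_s^{A'B'}|\phi\rangle=\frac14(1+\alpha+\beta+\gamma)$ and thereby reduce the lemma to showing $\sup_{x,y}|\langle\psi|xy\rangle|^2|\langle\chi|xy\rangle|^2\ge\alpha+\beta+\gamma-2$. From that point on, however, your argument is a plan rather than a proof: you verify only the rank-one subcase, and for the general case you say you ``would expand'' the overlap on a candidate $(x,y)$ and reduce to a polynomial inequality, while explicitly conceding that the product-state restriction (the Takagi optimum being entangled) is the unresolved obstacle. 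That obstacle is precisely the content of the lemma, so the proof is incomplete. A further problem with the motivating picture: the claim that in the nontrivial regime ``each of $\alpha,\beta,\gamma$ must be close to $1$'' is false --- $\alpha+\beta+\gamma>2$ with each variable in $[0,1]$ only forces each to exceed $\alpha+\beta+\gamma-2$, which can be arbitrarily small (e.g.\ $\alpha=\beta=1$, $\gamma=0.1$), so the perturbative candidate $\psi\simeq\chi\simeq|x_0y_0\rangle$ does not cover the whole regime where the bound is nontrivial.

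For comparison, the paper closes the gap by deliberately weakening both sides to functions of two scalars. On the right-hand side, Cauchy--Schwarz gives $\alpha+\beta\le 2\max(\operatorname{Tr}\varrho_\psi^2,\operatorname{Tr}\varrho_\chi^2)$, where $\varrho_\psi,\varrho_\chi$ are single-factor reduced states of $\psi$ and $\chi$. On the left-hand side, it picks $|x\rangle|y\rangle$ to be the dominant Schmidt product term of one of the vectors, say $\psi$, and inserts $|\psi\rangle\langle\psi|$ into the second overlap to obtain $|\langle\phi|xxyy\rangle|\ge|\langle\psi|xy\rangle|^2\,|\langle\psi|\chi\rangle|$, hence a lower bound of the form $p_\psi^2\gamma$ with $p_\psi$ the largest squared Schmidt coefficient; assuming without loss of generality an effectively two-level Schmidt spectrum ties $p_\psi$ back to $\operatorname{Tr}\varrho_\psi^2$. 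The lemma then reduces to an elementary two-variable inequality in $(\alpha',\beta')=(\sqrt{2\max(\operatorname{Tr}\varrho_\psi^2,\operatorname{Tr}\varrho_\chi^2)-1},\,\gamma)$ treated as independent parameters in $[0,1]^2$. If you want to salvage your route, what is missing is exactly such a uniform, non-perturbative lower bound on the best \emph{product} overlap valid for arbitrary Schmidt spectra of $\psi$ and $\chi$; the Takagi computation of the unrestricted supremum does not supply it.
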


\begin{proof}
  Write $|\phi\>=|e\>_{AA'}|f\>_{BB'}$.  We then find
  \begin{multline}
    \<\phi|P_s^{AB}\ot P_s^{A'B'}|\phi\> \\=
    {1\over 4} (1+ \tr \varrho^e_{A} \varrho^f_{B}+
    \tr \varrho^e_{A'} \varrho^f_{B'}+|\<e|f\>|^2)
  \end{multline}
  where $\varrho^e_{A}$ is reduced density matrix of $|e\>$ etc.
  Schwarz inequality then implies
  \begin{multline}
    \<\phi|P_s^{AB}\ot P_s^{A'B'}|\phi\> \\ \leq
    {1\over 4}(1+2 \max(\tr \varrho_e^2,\tr \varrho_f^2)+
    |\<e|f\>|^2)
    \label{eq:phi-ss}
  \end{multline}
  where $\varrho_{e}$ is either of reduced density matrices of $|e\>$,
  similarly for $\varrho_f$.

  On the other hand one finds
  \begin{align}
    |\<\phi|xxyy\>|&=|\<e|xy\>\<f|xy\>| \\
    &\geq |\<e|xy\>\<f|e\>\<e|xy\>|
    =|\<e|xy\>|^2 |\<e|f\>|
    \label{eq:phixxyy}
  \end{align}
  which implies
  \begin{align}
    \sup_{x,y}|\<\phi|xxyy\>|^2\geq \max(p_e,p_f)|\<e|f\>|
  \end{align}
  where $p_e,p_f$ are the largest eigenvalues of $\varrho_e,\varrho_f$
  respectively.  Combining the two equations, and noticing that without
  loss of generality one can assume that $\tr \varrho_e^2=p_e^2 +(1-p_e)^2$
  and the same for $\tr \varrho_f^2$, one obtains
  \begin{align}
    \sup_{x,y}|\<\phi|xxyy\>|^2\geq {1\over 4}(1+\alpha^2)\beta
  \end{align}
  and
  \begin{align}
    \<\phi|P_s\ot P_s|\phi\>\leq {1\over 4}(2+\alpha^2+\beta)
\end{align}
  where
  \begin{align}
    \alpha=\sqrt{2\max(\tr\varrho_e^2,\tr \varrho_f^2-1)};\quad
    \beta=|\<e|f\>|^2;\quad 0\leq \alpha,\beta\leq 1.
  \end{align}
  Treating $\alpha$ and $\beta$ as independent variables, after some
  elementary, but lengthy algebra, one gets the desired result.
\end{proof}

The above lemmas lead to the following

\begin{proposition}
  For any product state $\phi$ we have
  \begin{align}
    \sup_\chi|\<\phi|\chi\>|^2\geq 16 \<\phi|Q|\phi\>-5
  \end{align}
  where supremum is taken over vectors
  $\chi=|x\>_A|x^*\>_B|y\>_{A'}|y^*\>_{B'}$.
  \label{prop:phi-chi}
\end{proposition}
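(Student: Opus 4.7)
The key observation is that the target states $\chi = |x\>_A|x^*\>_B|y\>_{A'}|y^*\>_{B'}$ are precisely the partial transposes of states of the form $|xxyy\>$ that appear in Lemma 2. The plan is therefore to chain the two preceding lemmas, applying each in turn, with a partial transpose interchange in between.

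First, since $\phi$ is a product state with respect to the $AA':BB'$ cut, the partial transpose $\phi^\Gamma$ (say, on the $B$ and $B'$ subsystems) is a well-defined pure state, itself still product with respect to $AA':BB'$. The overlap we want to bound satisfies
\begin{align}
  \sup_\chi |\<\phi|\chi\>|^2 = \sup_{x,y}|\<\phi^\Gamma| xx\>_{AB}|yy\>_{A'B'}|^2,
\end{align}
because taking partial transpose on $B,B'$ maps $|xxyy\>$ to $|xx^*yy^*\>$ and preserves inner products up to complex conjugation. This converts the right-hand side into exactly the quantity estimated in the second lemma above.

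Next, I apply the second lemma to the product state $\phi^\Gamma$, which gives
\begin{align}
  \sup_{x,y}|\<\phi^\Gamma|xxyy\>|^2 \geq 4\<\phi^\Gamma|P_s^{AB}\ot P_s^{A'B'}|\phi^\Gamma\>-3.
\end{align}
Then I apply the first lemma, also to $\phi^\Gamma$ (again a legitimate product state to which $\Gamma$ applies), using that $(\phi^\Gamma)^\Gamma = \phi$, to obtain
\begin{align}
  \<\phi^\Gamma|P_s\ot P_s|\phi^\Gamma\> \geq 4\<\phi|Q|\phi\> - \tfrac12.
\end{align}
Substituting this into the previous display yields
\begin{align}
  \sup_\chi|\<\phi|\chi\>|^2 \geq 4\left(4\<\phi|Q|\phi\>-\tfrac12\right)-3 = 16\<\phi|Q|\phi\>-5,
\end{align}
which is exactly the claim.

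There is essentially no obstacle here beyond bookkeeping: the real content sits in the two lemmas (the identity relating $Q^\Gamma$ to symmetric/antisymmetric projectors, and the Schwarz-inequality estimate on $|\<e|xy\>\<f|xy\>|$). The only thing to verify carefully is that the partial transpose on a product state really does preserve the product structure with respect to the $AA':BB'$ cut, so that both lemmas may be applied to $\phi^\Gamma$; this is immediate from the factorization $|\phi\> = |e\>_{AA'}|f\>_{BB'}$.
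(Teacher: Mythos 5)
Your proof is correct and is exactly the argument the paper intends: the paper states the proposition with only the remark that ``the above lemmas lead to the following,'' and your chaining of the two lemmas applied to $\phi^\Gamma$, with the observation that $|\<\phi|\chi\>| = |\<\phi^\Gamma|xx\>_{AB}|yy\>_{A'B'}|$ and $(\phi^\Gamma)^\Gamma=\phi$, is precisely the omitted bookkeeping.
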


Subsequently, writing
\begin{align}
  \phi=a \chi + b \psi;\quad \phi^\perp=\tilde a \tilde\chi +
  \tilde b \tilde\psi
\end{align}
where $\phi^\perp$ is a product state orthogonal to $\phi$, and
$\chi\perp\psi$, $\tilde \chi\perp \tilde \psi$, with
$\chi,\tilde\chi$ being of the form $|xx^*yy^*\>$ and $\psi$,
$\tilde\psi$ normalized, we obtain
\begin{align}
  |\<\phi|Q|\phi^\perp\>|\leq |a\tila|\, |\<\chi|Q|\tilde{\chi}\>|+
  \sqrt{3\over 8}(|a\tilb| + |b\tila|)+ |b\tilb|
  \label{eq:q-phi}
\end{align}
where we have used the fact that maximal overlap of $Q$ with a product
state does not exceed $3/8$.  By direct computation we also obtain
\begin{align}
  \<\chi|Q|\tilde\chi\>=-{1\over 8} + {1\over 4} (\<\chi_1|\tilde\chi_1\>
  +\<\chi_2|\tilde\chi_2\>)
  \label{eq:q-chi}
\end{align}
where $|\chi_1\>=|xx^*\>_{AB},|\chi_2\>=|yy^*\>_{A'B'}$ and
$|\tilde\chi_1\>=|\tilde x{\tilde x}^*\>_{AA'},
|\tilde\chi_2\>=|\tilde y{\tilde y}^*\>_{BB'}$.  Using the fact that
$\<\phi|\phi^\perp\>=0$ we get
\begin{align}
  |\<\chi_1|\tilde\chi_1\>|\,|\<\chi_2|\tilde\chi_2\>|
  \leq |b\tila| + |a \tilb|.
  \label{eq:chichi}
\end{align}
Since for any numbers $a$, $b$ satisfying $0 \le a,b\leq 1$ we have
$a+b\leq ab+1$ and combining (\ref{eq:q-phi}), (\ref{eq:q-chi}) and
(\ref{eq:chichi}) we get

\begin{proposition}
  For any product orthogonal states $\phi$ and $\phi^\perp$ we have
  \begin{align}
    |\<\phi|Q|\phi^\perp\>|&\leq
    a_1a_2(-{1\over 8}+{1\over 4}(1+a_1 b_2 + a_2 b_1)) \nonumber\\
    &\phantom{\leq}+\sqrt{3\over 8}
    (a_1 b_2 + a_2 b_1)+b_1 b_2\equiv g(a_1,a_2)
  \end{align}
  where $a_1=|a|=|\<\phi|\chi\>|$, $a_2=|\tila|=|\<\phi|\chi\>|$,
  $b_1=\sqrt{1-a_1^2}$, $b_2=\sqrt{1-a_2^2}$, and $\chi,\tilde\chi$
  are of the form $|xx^*yy^*\>$.
\end{proposition}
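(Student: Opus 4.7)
The plan is simply to assemble the three inequalities \eqref{eq:q-phi}, \eqref{eq:q-chi}, \eqref{eq:chichi} together with the elementary numerical fact $c_1+c_2\le 1+c_1 c_2$ (valid for $c_1,c_2\in[0,1]$) into a single compound bound. All the ingredients have already been prepared in the paragraphs immediately preceding the proposition, so no new technical machinery is needed; the task is really one of careful substitution.

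First I would rewrite \eqref{eq:q-phi} in the notation of the proposition, noting that $|a|=a_1$, $|\tilde a|=a_2$, $|b|=b_1$, $|\tilde b|=b_2$. This immediately produces the last two summands $\sqrt{3/8}\,(a_1b_2+a_2b_1)+b_1b_2$ of $g(a_1,a_2)$, and leaves only the factor $|\<\chi|Q|\tilde\chi\>|$ (which multiplies $a_1a_2$) to be controlled.

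Next I would use \eqref{eq:q-chi} to express $\<\chi|Q|\tilde\chi\>=-\tfrac18+\tfrac14(\<\chi_1|\tilde\chi_1\>+\<\chi_2|\tilde\chi_2\>)$. Setting $c_i:=|\<\chi_i|\tilde\chi_i\>|\in[0,1]$, I would take moduli in the form dictated by the proposition, so that the $-\tfrac18$ is kept with its sign, giving the target
\begin{align}
|\<\chi|Q|\tilde\chi\>|\le -\tfrac18+\tfrac14\bigl(1+c_1c_2\bigr),
\end{align}
where I have used $c_1+c_2\le 1+c_1c_2$ to convert the sum into a product plus one. Applying the orthogonality constraint \eqref{eq:chichi}, namely $c_1c_2\le a_1b_2+a_2b_1$, upgrades this to
\begin{align}
|\<\chi|Q|\tilde\chi\>|\le -\tfrac18+\tfrac14\bigl(1+a_1b_2+a_2b_1\bigr).
\end{align}
Multiplying by $a_1a_2$ and adding the remaining terms identified in the first step yields exactly $g(a_1,a_2)$, which completes the plan.

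The one genuinely delicate point, and the part I would flag as the only obstacle worth attention, is the sign bookkeeping in the bound on $|\<\chi|Q|\tilde\chi\>|$. A naive triangle inequality would produce $+\tfrac18$ instead of $-\tfrac18$, which would loosen the estimate; one has to exploit that the $-\tfrac18$ in \eqref{eq:q-chi} is a fixed real constant and combine it with the modulus of the variable term to preserve the minus sign, using $c_1+c_2\le 1+c_1c_2$ to close the gap between the linear sum $(c_1+c_2)$ that appears in \eqref{eq:q-chi} and the product $c_1c_2$ that is controlled by \eqref{eq:chichi}. Everything else is routine substitution.
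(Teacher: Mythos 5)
Your route is exactly the paper's: the proposition is obtained by splicing together \eqref{eq:q-phi}, \eqref{eq:q-chi} and \eqref{eq:chichi} with the elementary fact $c_1+c_2\le 1+c_1c_2$, and your substitutions reproduce $g(a_1,a_2)$ term by term. The one step you rightly flag as delicate, however, is not justified by the mechanism you describe: for a general complex variable term $w$ one has only $|{-\tfrac18}+w|\le \tfrac18+|w|$, and ``keeping the $-\tfrac18$ with its sign while taking the modulus of the variable term'' is simply false as an inequality (take $w$ negative real). What actually rescues the minus sign is the special form of $\chi_1,\tilde\chi_1$: since $|\chi_1\>=|xx^*\>$ and $|\tilde\chi_1\>=|\tilde x\tilde x^*\>$, one has $\<\chi_1|\tilde\chi_1\>=\<x|\tilde x\>\overline{\<x|\tilde x\>}=|\<x|\tilde x\>|^2\in[0,1]$, and likewise $\<\chi_2|\tilde\chi_2\>=|\<y|\tilde y\>|^2$. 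Hence $\<\chi|Q|\tilde\chi\>$ is real and lies in $[-\tfrac18,\tfrac38]$, and with $c_i=\<\chi_i|\tilde\chi_i\>\ge0$ one checks both branches of the absolute value: if $-\tfrac18+\tfrac14(c_1+c_2)\ge0$ then $c_1+c_2\le1+c_1c_2$ gives the bound $\tfrac18+\tfrac14c_1c_2$, while if it is negative the modulus is at most $\tfrac18$, which is again $\le\tfrac18+\tfrac14c_1c_2$. With that observation inserted, your assembly closes the argument exactly as in the paper.
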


Let us observe that
\begin{align}
  &\sup_{\phi_2} \<\phi_2|Q|\phi_2\> \\&= \sup_{\phi_1,\phi_1^\bot,p}
  (\sqrt {p} \<\phi_1| + \sqrt {1-p}\<\phi_1^\bot|) Q
  (\sqrt {p} |\phi_1\> + \sqrt {1-p}|\phi_1^\bot\>) \\
  &= \sup_{\phi_1,\phi_1^\bot}\sup_p \nonumber\\&\quad
  \begin{bmatrix}
    \small \sqrt {p} \\ \sqrt {1-p}
  \end{bmatrix}^T
  \begin{bmatrix} \label{eq:mat}
    \pQp & \re \<\phi_1|Q|\phi_1^\bot\>\\
    \re \<\phi_1^\bot|Q|\phi_1\> & \oQo
  \end{bmatrix}
  \begin{bmatrix}
    \sqrt {p} \\ \sqrt {1-p}
  \end{bmatrix}
  \\
  \label{eq:fancy-sup}
  &=\sup_{\phi_1,\phi_1^\bot}\frac12 \bigg( \pQp + \oQo
  \\&\phantom{=}
  + \sqrt {(\pQp-\oQo)^2 + 4 (\re \<\phi_1|Q|\phi_1^\bot\>)^2}
  \bigg)
\end{align}
the last expression is simply larger eigenvalue of the matrix in
\eqref{eq:mat}.

Now denoting $\gamma_1=\<\phi|Q|\phi\>$,
$\gamma_2=\<\phi^\perp|Q|\phi^\perp\>$, we get
\begin{align}
  \<\phitwo|Q|\phitwo\>\leq \gamma_1+\gamma_2
\end{align}
from Schwarz inequality. On the other hand using \eqref{eq:bound-38+}
and proposition \ref{prop:phi-chi} we get
\begin{align}
  \<\phitwo|Q|\phitwo\>\leq \frac38 + \sup_{a_1,a_2} g(a_1,a_2)
\end{align}
where supremum is taken over $a_1$, $a_2$ satisfying
\begin{align}
  16 \gamma_i -5 \le a_i^2\le 1 ,\quad i=1,2.
\end{align}

Finally we obtain the following estimate
\begin{align}
  \tQt \le \frac38 + \min(\gamma, f(\gamma))
\end{align}
where $\gamma=\min(\gamma_1, \gamma_2)$ and
\begin{align}
  f(\gamma) = \sup_{a_1,a_2} g(a_1,a_2)
\end{align}
where supremum is taken over $16\gamma-5 \le a_i^2 \le 1$.  Looking on
the plot of $g(a_1,a_2)$ one can find that the maximum is obtained for
$a_1=a_2$.  This leads to the bound
\begin{align}
  \<\phitwo|Q|\phitwo\>\leq 0.74971<3/4.
\end{align}

\section{Application of Entanglement measures}
\label{sec:ent}

Then we will show how entanglement measures can be applied to the
problem of \halfp{}.

The formula $\<\phitwo|Q|\phitwo\>$ can be written as follows:
\begin{align}
  \<\phitwo|Q|\phitwo\> = \tr (\tcal(|\phitwo\>\<\phitwo|) Q)
\end{align}
where $\tcal$ is pairwise $UU^*$ twirling, followed by random
permutation of pairs.  Since $\tcal$ is LOCC operation, the state
$\sigma=\tcal(|\phitwo\>\<\phitwo|)$ cannot have greater entanglement
than the state $\phitwo$. Then, one can hope, that if entanglement of
$\sigma$ is not too large, then also $\tr \sigma Q$ will be bounded.
Write
\begin{align}
  \label{eq:multi-iso}
  \sigma &= {p\over 2} (\portn\ot\pplus +\pplus\ot \portn) +
  s \pplus\ot \pplus \nonumber\\&\phantom{=}
  + (1-p-s) \portn\ot \portn
\end{align}
with $\portn=(I-\pplus)/(d^2-1)$ and probabilities $p,s$ satisfying
$p+s\leq 1$. Then we have
\begin{align}
  \tr \sigma Q = p.
\end{align}

\subsection{Negativity}

We will use the negativity \cite{ZyczkowskiHSP-vol}, or more precisely a
closely related quantity $\|\varrho^\Gamma\|$, which is monotonous under
LOCC \cite{Vidal-Werner}.  In our case, one finds that
\begin{align}
  \|\sigma^\Gamma\|= {1\over 4}(2|1-16s|+|1+8s-4p|+1+24 s +4p).
\label{eq:neg}
\end{align}
Now monotonicity requires that
\begin{align}
  \|\sigma^\Gamma\|\leq \|\phitwo^\Gamma\|=|a+b|^2
\end{align}
where $a,b$ are Schmidt coefficients of $\phitwo$.  This inequality
together with (\ref{eq:neg}) implies in particular that
\begin{align}
  p\leq {1\over 4} - 6 \<\phitwo|\pplus\ot \pplus|\phitwo\>
  +2|a+b|^2.
\end{align}
Note that for fixed Schmidt coefficients $a,b$ maximal overlap with
$\pplus\ot \pplus$ cannot exceed $|a+b|^2/16$.  We then obtain, that
for those states which achieve this maximal overlap there holds the
half-property.  However such states are simply states of the form
\begin{align}
  \phitwo=a|e_1\>_{AA'}|e_1^*\>_{BB'}+b|e_2\>_{AA'}|e_2^*\>_{BB'}
\end{align}
with $a,b,\geq 0$.  Since such states have positive matrix $C$ we end
up with yet another proof of \halfp{} for this class of states.

For states that are orthogonal to $P_+\ot P_+$ negativity gives bound
$3/4$. We have also tried the relative entropy of entanglement and
the realignment but worse results have been obtained.

\subsection{Half-property and Schmidt rank of some symmetric states}

The possibility of application of entanglement measures to the problem
of \halfp{} can be also seen from the following different perspective.
Namely, one can classify states with respect to Schmidt rank.  We say
that a mixed state has Schmidt rank $k$, if it can be written as a
mixture of pure states of Schmidt rank $k$, but cannot be written as a
mixture of pure states of Schmidt rank $k-1$ (cf
\cite{Terhal-Pawel-rank}).  We then have the following

\begin{fact}
  The projector $Q$ has \halfp{} if and only if for all states
  $\sigma$ of the form (\ref{eq:multi-iso}) which have Schmidt rank
  $\leq 2$ we have $p \leq 1/2$.
\end{fact}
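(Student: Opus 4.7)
The plan is to use the pairwise $UU^*$ twirling $\tcal$ (followed by random permutation of the two pairs) as the bridge between pure Schmidt rank two states and the symmetric states of the form \eqref{eq:multi-iso}. Two ingredients from the discussion preceding the Fact are crucial: (i) $\tcal$ is an LOCC map, so it cannot increase Schmidt rank of a pure input, and $\tcal(|\phitwo\>\<\phitwo|)$ automatically lands in the family \eqref{eq:multi-iso} (because the $UU^*$ twirl on each pair projects onto the two-dimensional commutant spanned by $\pplus$ and $\portn$, and the permutation symmetrizes the ``mixed'' coefficients); (ii) by Observation \ref{obs:Q-invariance} and the obvious symmetry of $Q$ under swap of the pairs, $Q$ is invariant under $\tcal$, so $\tcal$ is self-dual on $Q$ and
\begin{align}
  \tr(\tcal(\rho)\,Q)=\tr(\rho\,\tcal(Q))=\tr(\rho\,Q).
\end{align}
Combined with the computation $\tr(\sigma Q)=p$ already recorded just above the Fact, this is the entire mechanism we need.

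For the forward direction ($\Rightarrow$), I would assume $Q$ has \halfp{} and take an arbitrary $\sigma$ of the form \eqref{eq:multi-iso} having Schmidt rank $\leq 2$. By definition of Schmidt rank of a mixed state, decompose $\sigma=\sum_i q_i|\phitwo^{(i)}\>\<\phitwo^{(i)}|$ with each $\phitwo^{(i)}$ of Schmidt rank at most $2$. Then
\begin{align}
  p=\tr(\sigma Q)=\sum_i q_i\<\phitwo^{(i)}|Q|\phitwo^{(i)}\>\leq \tfrac12.
\end{align}

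For the backward direction ($\Leftarrow$), assume the Schmidt-rank-$\leq 2$ criterion for all symmetric $\sigma$ and take an arbitrary Schmidt rank two pure state $\phitwo$. Set $\sigma=\tcal(|\phitwo\>\<\phitwo|)$. Because $\tcal$ is LOCC and admits a realization as a convex combination of local unitary channels (the integral/average over $U\ot U^*\ot U\ot U^*$ together with the classical permutation), it maps the pure input to a convex mixture of pure states each of Schmidt rank at most two; hence $\sigma$ has Schmidt rank $\leq 2$. Moreover $\sigma$ has the form \eqref{eq:multi-iso} by ingredient (i), and by ingredient (ii),
\begin{align}
  p=\tr(\sigma Q)=\tr(|\phitwo\>\<\phitwo|\,Q)=\<\phitwo|Q|\phitwo\>.
\end{align}
The hypothesis then yields $\<\phitwo|Q|\phitwo\>\leq 1/2$, i.e.\ \halfp{}.

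I do not expect a serious obstacle here: everything reduces to checking the two ingredients above carefully. The one spot that requires a moment of care is the backward direction, namely verifying that the LOCC implementation of $\tcal$ produces an ensemble of Schmidt-rank-$\leq 2$ pure states (so that the resulting $\sigma$ qualifies as Schmidt rank $\leq 2$ in the sense of \cite{Terhal-Pawel-rank}); this follows because each unitary $U\ot U^*\ot U\ot U^*$, and the swap of the two pairs, is a local unitary with respect to the $AA':BB'$ cut and therefore preserves Schmidt rank of the pure state it acts on.
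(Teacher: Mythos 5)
Your proof is correct and follows essentially the same route as the paper, which disposes of the Fact in one line ("one direction is trivial, the other follows from twirling"): the convex-decomposition argument for the forward implication, and the twirl $\tcal$ --- realized as a random local unitary with respect to the $AA':BB'$ cut, hence Schmidt-rank preserving and self-dual on the invariant operator $Q$ --- for the converse. Your write-up merely makes explicit the two checks the paper leaves implicit (that $\tcal(|\phitwo\>\<\phitwo|)$ lands in the family \eqref{eq:multi-iso} with Schmidt rank $\le 2$, and that $\tr(\tcal(\rho)Q)=\tr(\rho Q)$), and both are verified correctly.
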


One direction is trivial, the other follows from twirling.  Thus if we
are able to prove that all states $\sigma$ of the form
(\ref{eq:multi-iso}) with $p>1/2$ have Schmidt rank $>2$, we would
solve the problem of \halfp.  To this end we should find a map
$\Lambda$ such that $I\ot \Lambda$ is nonnegative on Schmidt rank two
pure states (such maps are called two positive), and at the same time
negative on all states $\sigma$ with $p\geq 1/2$.  Indeed, this would
mean that all states $\sigma$ with $p\geq 1/2$ have Schmidt rank $>2$.

Using this approach one can also get bounds for our quantity
$\<\phitwo| Q |\phitwo\>$.  For example we have checked that the
following two-positive map $\Lambda(A)= I \,\tr A - 1/2 A$ is negative
for $p> 3/4$ which reproduces the bound obtained by means of product
states.

In this context we see why entanglement measures can be applied to our
problem.  Namely, if an entanglement measure of a given state is greater
than maximum of this measure over Schmidt rank two pure states, then
the state must have Schmidt rank two greater than 2.

\subsection{Continuity of entanglement and bound entanglement}

One could ask the question whether there exist a continuous
entanglement measure which would detect between three kinds of states:
1) separable, 2) bound entangled, and 3) distillable ones.  There are
measures such as the entanglement of formation which distinguish between 1
(for which it is zero) versus 2 and 3 (for which it is nonzero), and
there is a measures, the distillable entanglement, which distinguishes
between 1 and 2 (for which it is zero) versus 3 (for which it is non
zero).  But any measure that would distinguish between the three
classes of states by its value in a way that entanglement of all bound
entangled states is non zero but smaller than entanglement of any
distillable state must be non continuous.  Indeed for such a measure
there must be a range of values reserved for bound entangled states,
creating a gap between separable states and distillable ones. On the
other hand we can take a sequence of distillable states with a limit
being a separable state (and so with zero value of entanglement), but the
limit of the entanglement for this sequence must be at most supremum
of its value on bound entangled states.  Note that provided that NPT
bound entangled states exist such a measure would also increase under
tensoring because then there would exist bound entangled states whose
tensor product is distillable \cite{ShorST2001}, as a matter of fact
the same would then hold for the distillable entanglement.

\section*{ACKNOWLEDGEMENTS}

This work is supported by EU grant SCALA FP6-2004-IST no.015714.

\section*{APPENDIX}

\begin{lemma}
  \label{lm:equal-divide}
  The minimum value of $\sum_{i=1}^d |\tila_i|^2$ subject to
  $\sum_{i=1}^{d} \tila_i = z$ where $\tila_i, z \in \mathbb{C}$ is
  obtained by settings $\tila_i = \frac{z}{d}$.
\end{lemma}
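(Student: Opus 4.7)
The plan is to prove the lemma by an explicit perturbation argument around the candidate minimizer $\tilde a_i = z/d$, which makes both the minimum value and the uniqueness manifest.

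First I would introduce the decomposition $\tilde a_i = z/d + \epsilon_i$ with $\epsilon_i \in \mathbb{C}$. The constraint $\sum_{i=1}^d \tilde a_i = z$ is then equivalent to $\sum_{i=1}^d \epsilon_i = 0$, so the constrained problem becomes unconstrained in the variables $\epsilon_i$ lying in the hyperplane $\sum_i \epsilon_i = 0$.

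Next, I would expand the objective:
\begin{align}
\sum_{i=1}^d |\tilde a_i|^2
&= \sum_{i=1}^d \left| \frac{z}{d} + \epsilon_i \right|^2 \\
&= \frac{|z|^2}{d} + 2\,\re\!\left( \frac{\bar z}{d} \sum_{i=1}^d \epsilon_i \right) + \sum_{i=1}^d |\epsilon_i|^2.
\end{align}
The middle term vanishes by the constraint, leaving $\sum_i |\tilde a_i|^2 = |z|^2/d + \sum_i |\epsilon_i|^2$. Since $\sum_i |\epsilon_i|^2 \ge 0$ with equality iff every $\epsilon_i = 0$, the minimum value $|z|^2/d$ is attained exactly at $\tilde a_i = z/d$, as claimed.

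There is no real obstacle here; the only small care needed is to verify that the cross term vanishes even for complex $z$ (which it does, because the linear constraint $\sum_i \epsilon_i = 0$ holds in $\mathbb{C}$, not just $\mathbb{R}$). As an equally short alternative I would note that the same conclusion follows from the Cauchy--Schwarz inequality $|z|^2 = |\sum_i 1\cdot \tilde a_i|^2 \le d \sum_i |\tilde a_i|^2$ with equality iff all $\tilde a_i$ coincide; but the perturbation proof is slightly more informative since it displays the gap $\sum_i |\epsilon_i|^2$ explicitly.
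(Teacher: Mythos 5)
Your proof is correct and complete, but it takes a different route from the paper's. The paper argues by local averaging: using the parallelogram identity it shows that whenever $\tila_i\neq\tila_j$, replacing both by $\tfrac{\tila_i+\tila_j}{2}$ strictly decreases $\sum_i|\tila_i|^2$ while preserving the constraint, and concludes that the minimizer must have all entries equal. Your argument instead decomposes $\tila_i=z/d+\epsilon_i$ with $\sum_i\epsilon_i=0$, observes that the cross term vanishes, and reads off $\sum_i|\tila_i|^2=|z|^2/d+\sum_i|\epsilon_i|^2$ directly. Your version buys something concrete: it exhibits the minimum value and the exact gap from it, and it establishes existence and uniqueness of the minimizer in one stroke, whereas the paper's replacement argument, as stated, only shows that any non-constant configuration is not optimal and tacitly presupposes that a minimizer exists (or would need an iteration-and-limit argument to be fully airtight). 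The paper's approach has the mild advantage of fitting the style of the subsequent Proposition~\ref{pp:a1b1b2}, where the same replacement trick is reused to free up slack in the normalization constraint; your decomposition would serve equally well there. Either proof is acceptable; yours is the cleaner one for this isolated statement.
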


\begin{proof}
  From the parallelogram identity we have
  \begin{align}
    \frac12 |\tila_i + \tila_j|^2
    = |\tila_i|^2 + |\tila_j|^2 - \frac12 |\tila_i - \tila_j|
    \le |\tila_i|^2 + |\tila_j|^2
  \end{align}
  with equality iff $\tila_i = \tila_j$.  Thus whenever for some
  $\tila_i$, $\tila_j$ we have $\tila_i\neq \tila_j$ we can replace
  them with two instances of $\frac{\tila_i + \tila_j}{2}$ decreasing
  the value of $\sum_{i=1}^d |\tila_i|^2$ and leaving the constrain
  satisfied.  This implies that the optimal solution is to take all
  $\tila_i$ equal, i.e. $\tila_i = \frac{z}{d}$.
\end{proof}

\begin{proposition}
  \label{pp:a1b1b2}
  For all $d\ge3$ dimensional vectors $\vec{a}$ and $\vec{b}$ with
  complex elements $\tila_i$ and $\tilb_i$ and satisfying the
  constraints
  \begin{align}
    \label{eq:a1b1b2-constraint}
    \sum_{i=1}^{d} \tila_i = \sum_{i=1}^{d} \tilb_i = 0, \qquad
    \sum_{i=1}^{d} |\tila_i|^2 + \sum_{i=1}^{d}|\tilb_i|^2 = \frac{1}{d}
  \end{align}
  the following equality holds
  \begin{align}
    \label{eq:a1b1b2-f}
    \max_{\vec{a},\vec{b}} \left(
      |\tila_1 + \tilb_1|^2 + |\tila_1+\tilb_2|^2
    \right) = \frac{3d-4}{d^2}.
  \end{align}
\end{proposition}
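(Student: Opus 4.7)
The plan is to reduce the many-variable optimization to a low-dimensional Lagrange problem by first eliminating the $\tila_i, \tilb_i$ that do not appear in the objective, then exploiting a degeneracy to pin down a boundary condition, and finally executing a short explicit calculation.

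First I would invoke Lemma~\ref{lm:equal-divide} to dispose of the ``irrelevant'' components. Since the objective depends only on $\tila_1,\tilb_1,\tilb_2$, for any fixed values of these three numbers the remaining coordinates contribute to the quadratic budget in \eqref{eq:a1b1b2-constraint}; the minimum contribution, by the lemma, is attained at $\tila_i=-\tila_1/(d-1)$ for $i\geq 2$ and $\tilb_i=-(\tilb_1+\tilb_2)/(d-2)$ for $i\geq 3$. Any slack in the constraint can be used to increase $|\tila_1|$ and thereby the objective, so the maximum is attained when the irrelevant part achieves its minimum. The constraint then reduces to
\[
\frac{d}{d-1}|\tila_1|^2+|\tilb_1|^2+|\tilb_2|^2+\frac{|\tilb_1+\tilb_2|^2}{d-2}=\frac{1}{d}.
\]
Rotating $\tila_1$ by its phase and applying a common phase to $(\tilb_1,\tilb_2)$ lets me assume $a:=\tila_1\geq 0$ and $s:=\tilb_1+\tilb_2\geq 0$. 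Setting $\tau:=|\tilb_1-\tilb_2|^2$, the parallelogram identity gives $|\tilb_1|^2+|\tilb_2|^2=(s^2+\tau)/2$.

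The problem is now to maximize
\[
f(a,s,\tau)=2a^2+\frac{s^2+\tau}{2}+2as
\]
subject to $a,s,\tau\geq 0$ and
\[
C(a,s,\tau)=\frac{d}{d-1}a^2+\frac{ds^2}{2(d-2)}+\frac{\tau}{2}=\frac{1}{d}.
\]
The decisive observation is that $\tau$ has the same coefficient $1/2$ in $f$ and $C$. Consequently the KKT condition for $\tau\geq 0$ forces either $\tau=0$ or the multiplier $\lambda=1$. The $\lambda=1$ branch requires stationarity of $f-C=\frac{d-2}{d-1}a^2-\frac{1}{d-2}s^2+2as$ in $(a,s)$, whose critical-point system $s=a(d-2)$ together with $s=-2a(d-2)/(d-1)$ is inconsistent for $d\geq 3$ except at $a=s=0$; there $f=\tau/2\leq 1/d<(3d-4)/d^2$, which is smaller than the claimed value.

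Thus the optimum lies on the face $\tau=0$, where the stationarity equations $2a+s=\lambda ad/(d-1)=\lambda sd/(d-2)$ force $s=a(d-2)/(d-1)$ and $\lambda=(3d-4)/d$. Substituting into $C=1/d$ yields $a^2=2(d-1)^2/[d^2(3d-4)]$, and plugging into $f$ produces, over the common denominator $2(d-1)^2$, the polynomial identity
\[
4(d-1)^2+(d-2)^2+4(d-1)(d-2)=(3d-4)^2,
\]
which collapses $f$ to exactly $(3d-4)/d^2$. Compactness of the feasible set guarantees the supremum is attained, and the explicit critical point witnesses equality in \eqref{eq:a1b1b2-f}. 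I expect the main obstacle to be conceptual rather than computational: carefully justifying that the degenerate $\tau$ direction can be set to zero, i.e.\ that the $\lambda=1$ branch really contributes nothing; the remaining algebra, although it requires some care, is elementary and the collapse to $(3d-4)^2$ is a pleasant confirmation that no sign or factor has been lost.
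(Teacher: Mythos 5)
Your reduction to the three active variables via Lemma~\ref{lm:equal-divide}, the phase normalizations, and the closing Lagrange computation all track the paper's proof; where you genuinely depart is in the middle. The paper spends its two longest steps (steps 4 and 5) showing by explicit replacement arguments that one may take $\tilb_1,\tilb_2$ real and then equal; you instead pass to the coordinates $s=\tilb_1+\tilb_2$, $\tau=|\tilb_1-\tilb_2|^2$ and observe that $\tau$ enters the objective and the constraint with the same coefficient $\frac12$. Pushed one step further, eliminating $\tau$ through the constraint turns the objective into $\frac{d-2}{d-1}a^2+2as-\frac{s^2}{d-2}+\frac1d$ on the region $\frac{d}{d-1}a^2+\frac{ds^2}{2(d-2)}\le\frac1d$, which is manifestly nondecreasing in $a$ for $a,s\ge0$ and so forces $\tau=0$ at the optimum without any KKT case analysis (your $\lambda=1$ branch, with its unexamined corners $a=0$ or $s=0$, then becomes unnecessary). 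This is shorter and more transparent than the paper's replacement arguments. The endgame is the same computation in different clothes: your stationarity relation $s=a(d-2)/(d-1)$ and the identity $4(d-1)^2+4(d-1)(d-2)+(d-2)^2=(3d-4)^2$ reproduce the paper's $f(a_1^\star)=2x(y^{-1}+1)=(3d-4)/d^2$.

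Two small things to tighten. First, the claim that slack in the quadratic budget "can be used to increase $|\tila_1|$ and thereby the objective" is not literally true: increasing $|\tila_1|$ alone can decrease $|\tila_1+\tilb_j|^2$ when the phases oppose. Either scale $(\tila_1,\tilb_1,\tilb_2)$ jointly by $1+\delta$ (which multiplies the objective by $(1+\delta)^2$), or simply carry the reduced constraint as an inequality and note that your objective is increasing in $\tau$. Second, having located the interior critical point on the face $\tau=0$, you must still compare with the endpoints $a=0$ and $s=0$ of that arc, which give $(d-2)/d^2$ and $(2d-2)/d^2$ respectively, both below $(3d-4)/d^2$ for $d\ge3$; the paper performs this check explicitly, and an appeal to compactness alone does not discharge it. Neither point threatens the proof. (Minor typo: the $\lambda=1$ critical-point system gives $s=-a(d-2)/(d-1)$, without the factor $2$.)
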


\begin{corollary}
  For $d=4$ under this constraints we have
  \begin{align}
    \max_{\vec{a},\vec{b}} \left(
      |\tila_1 + \tilb_1|^2 + |\tila_1+\tilb_2|^2
    \right)= \frac12.
  \end{align}
\end{corollary}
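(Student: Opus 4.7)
The plan is to first reduce the $2d$-variable problem to a three-variable one via Lemma~\ref{lm:equal-divide}. Since the objective $|\tila_1+\tilb_1|^2+|\tila_1+\tilb_2|^2$ involves only $\tila_1,\tilb_1,\tilb_2$, for each choice of these three numbers we ask whether the zero-sum constraints can be satisfied with total squared norm $\le 1/d$. Applying Lemma~\ref{lm:equal-divide} separately to $\tila_2,\ldots,\tila_d$ (summing to $-\tila_1$) and to $\tilb_3,\ldots,\tilb_d$ (summing to $-(\tilb_1+\tilb_2)$), the existence of such an extension is equivalent to
\begin{align*}
|\tila_1|^2\,\frac{d}{d-1}+|\tilb_1|^2+|\tilb_2|^2+\frac{|\tilb_1+\tilb_2|^2}{d-2}\le\frac1d.
\end{align*}

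Next, I would change variables to separate the ``symmetric'' and ``antisymmetric'' components of $(\tilb_1,\tilb_2)$, setting $c=(\tilb_1+\tilb_2)/2$ and $e=(\tilb_1-\tilb_2)/2$. Then $|\tila_1+\tilb_1|^2+|\tila_1+\tilb_2|^2=2|\tila_1+c|^2+2|e|^2$ and $|\tilb_1|^2+|\tilb_2|^2=2|c|^2+2|e|^2$, so the constraint becomes
\begin{align*}
|\tila_1|^2\,\frac{d}{d-1}+2|c|^2\,\frac{d}{d-2}+2|e|^2\le\frac1d.
\end{align*}
A global phase rotation lets us take $\tila_1=a\ge0$; rotating the phase of $c$ and then of $e$ we may assume $c\ge0$ real (which maximizes $|\tila_1+c|$ for fixed $|\tila_1|,|c|$) and likewise $e$ real.

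The last step is a weighted Cauchy--Schwarz. Writing $p=d/(d-1)$ and $q=2d/(d-2)$ one has
\begin{align*}
(a+c)^2\le(pa^2+qc^2)\Bigl(\tfrac1p+\tfrac1q\Bigr)=(pa^2+qc^2)\,\frac{3d-4}{2d},
\end{align*}
and the reduced constraint gives $pa^2+qc^2\le \tfrac1d-2e^2$. Combining,
\begin{align*}
2(a+c)^2+2e^2\le\frac{3d-4}{d^2}-\frac{4(d-2)}{d}\,e^2\le\frac{3d-4}{d^2}
\end{align*}
for $d\ge3$. Equality is attained by taking $e=0$, saturating Cauchy--Schwarz (which fixes $a/c=2(d-1)/(d-2)$), filling the budget to equality, and then distributing the remaining coordinates as prescribed by Lemma~\ref{lm:equal-divide}. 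The corollary follows by setting $d=4$.

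The only point requiring care is the claim that $e\neq 0$ is never advantageous, but this is already visible from the final inequality, since its coefficient on $e^2$ is strictly negative for $d\ge 3$; so I do not foresee a serious obstacle to carrying out the plan.
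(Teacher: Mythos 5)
Your proof is correct, and while it shares its first reduction with the paper's proof of Proposition~\ref{pp:a1b1b2} (both use Lemma~\ref{lm:equal-divide} to eliminate the coordinates not appearing in the objective, arriving at the same three-variable constraint $\frac{d}{d-1}|\tila_1|^2+|\tilb_1|^2+|\tilb_2|^2+\frac{1}{d-2}|\tilb_1+\tilb_2|^2\le\frac1d$), the optimization itself is handled quite differently and, frankly, more cleanly. The paper proceeds by a sequence of variational replacements: it argues at some length that one may take $\tilb_1,\tilb_2$ real (comparing the objective before and after the substitution $\tilb_i\mapsto b_i\cos\beta_i$), then that $b_1=b_2$ is optimal, and finally solves a one-variable calculus problem to locate the critical point and evaluate $\frac{3d-4}{d^2}$. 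Your symmetric/antisymmetric change of variables $c=(\tilb_1+\tilb_2)/2$, $e=(\tilb_1-\tilb_2)/2$ collapses the phase reduction and the $b_1=b_2$ step into a single application of the parallelogram identity, and the weighted Cauchy--Schwarz inequality with weights $p=d/(d-1)$, $q=2d/(d-2)$ replaces the calculus, yielding the exact constant $\frac1p+\frac1q=\frac{3d-4}{2d}$ in closed form together with the equality condition $pa=qc$ (which reproduces the paper's maximizer, $a_1=3/8$, $b_1=b_2=1/8$ for $d=4$). The sign of the residual coefficient $-\frac{4(d-2)}{d}$ on $e^2$ correctly disposes of the only remaining degree of freedom for $d\ge3$. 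One point worth a sentence if you write this up: the passage from the equality constraint $\sum|\tila_i|^2+\sum|\tilb_i|^2=\frac1d$ to the inequality ``$\le\frac1d$'' for the reduced problem needs the observation that any slack can be absorbed into the $d-1\ge2$ free $\tila$-coordinates while keeping their sum fixed; this is immediate but should be said. Overall your route buys a shorter, computation-free argument and a transparent equality analysis, at the cost of no generality.
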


\begin{proof}[Proof of proposition \ref{pp:a1b1b2}]
  We denote function \eqref{eq:a1b1b2-f} as $f$, the vector of all
  $\tila_i$ as $\vec{a}$, the vector of all $\tilb_i$ as $\vec{b}$,
  and we use their polar decompositions
  \begin{align}
    \tila_i = a_i e^{i\alpha_i}, \quad
    \tilb_i = b_i e^{i\beta_i}, \quad
    a_i,b_i\in\mathbb{R}.
  \end{align}

  In optimizing function $f$ under the constraints
  \eqref{eq:a1b1b2-constraint} we shrink the set of possible $\vec{a}$
  and $\vec{b}$ in such a way to simplify the form of $f$ and the
  constraints but keeping at least one of the global maxima within the
  shrinking set.

  \begin{enumerate}
  \item Without loss of generality we can take $\tila_1=a_1\ge0$.
    Thus we optimize
    \begin{align}
      f(\vec{a},\vec{b})
      &= |a_1 + \tilb_1|^2 + |a_1+\tilb_2|^2 \\
      &= 2a_1^2 + b_1^2 + b_2^2 + 2a_1(b_1\cos\beta_1 + b_2\cos\beta_2).
    \end{align}

  \item We can consider only $\vec{b}$ for which
    \begin{align}
      b_1\cos\beta_1 + b_2\cos\beta_2 \ge 0.
    \end{align}
    (If it is negative we can change its sign by multiplying $\vec{b}$
    by $e^{i\pi}$ and thus increase $f$).

  \item In maximizing $f$ under the constraints it is always best to set
    \begin{align}
      \tila_i &= - \frac{a_1}{d-1} &(i>1) \\
      \tilb_i &= - \frac1{d-2}(\tilb_1 + \tilb_2) &(i>2)
    \end{align}
    Indeed whenever this setting is not used we can by lemma
    \ref{lm:equal-divide} obtain some freedom in the second constraint
    which we can use to increase $a_1$ and one of $b_1$ or $b_2$
    without decreasing $f$.  Thus it is enough to consider $\vec{a}$
    and $\vec{b}$ satisfying this setting, i.e. we optimize function
    $f(a_1,\tilb_1,\tilb_2)$ subject to the following constraints
    \begin{multline}
      \frac{d}{d-1} a_1^2 + b_1^2 + b_2^2
      + \frac{1}{d-2} \left|\tilb_1 + \tilb_2\right|^2 = \frac1d, \\
      a_1 \ge 0, \quad b_1\cos\beta_1 + b_2\cos\beta_2 \ge 0.
    \end{multline}

  \item Further we show that it is enough to consider
    $\tilb_1,\tilb_2\in\mathbb{R}$ as replacing $\tilb_1$ with
    $\tilb_1'=b_1\cos\beta_1$ and $\tilb_2$ with
    $\tilb_2'=b_2\cos\beta_2$ and changing $a_1$ to $a_1'$ to fit the
    constraint does not decrease $f$, i.e. $f(a_1', \tilb'_1,
    \tilb'_2) \ge f(a_1, \tilb_1, \tilb_2)$.  Namely we have
    \begin{multline}
      f(a_1', \tilb'_1, \tilb'_2) =
      2a_1'^2 + b_1^2\cos^2\beta_1 + b_2^2\cos^2\beta_2 \\
      + 2a_1'(b_1\cos\beta_1 + b_2\cos\beta_2)
    \end{multline}
    and the main constraint is
    \begin{multline}
      \frac{d}{d-1} a_1'^2 + b_1^2\cos^2\beta_1 + b_2^2\cos^2\beta_2 \\
      + \frac{1}{d-2} \left|b_1\cos\beta_1 + b_2\cos\beta_2\right|^2
      = \frac1d.
    \end{multline}
    First we show that $a_1' \ge a_1$ which is evident from the
    difference of main constraints
    \begin{align}
      &\frac{d}{d-1} (a_1'^2 - a_1^2) \nonumber\\&=
      b_1^2\sin^2\beta_1 + b_2^2\sin^2\beta_2
       \nonumber\\ &\qquad
      + \frac{1}{d-2} \left(
        \left|b_1e^{i\beta_1} + b_2e^{i\beta_2}\right|^2 -
        \left|b_1\cos\beta_1 + b_2\cos\beta_2\right|^2
      \right)
      \nonumber\\ &\ge 0.
    \end{align}
    Next we use this difference to show that $f$ does not decrease
    after the replacement
    \begin{multline}
      f(a_1', \tilb'_1, \tilb'_2) - f(a_1, \tilb_1, \tilb_2) \\
      = 2(a_1'^2 - a_1^2) - b_1^2 \sin^2\beta_1 - b_2^2\sin^2\beta_2 \\
      + 2(a_1' - a_1)(b_1\cos\beta_1 + b_2\cos\beta_2) \\
      \ge \frac{d-2}{d}(b_1^2 \sin^2\beta_1 + b_2^2\sin^2\beta_2) \ge 0.
    \end{multline}

    So we can focus on a problem with $\tilb_1,\tilb_2\in\mathbb{R}$
    \begin{align}
      &f(a_1, b_1, b_2) = 2a_1^2 + b_1^2 + b_2^2 + 2a_1(b_1+b_2) \\
      &\frac{d}{d-1} a_1^2 + b_1^2 + b_2^2 + \frac{1}{d-2} (b_1 + b_2)^2
      = \frac1d, \nonumber\\&\qquad\qquad
      a_1 \ge 0, \quad b_1 + b_2 \ge 0.
    \end{align}

  \item In analogous way we show that it is enough to consider
    $b_1=b_2 \ge 0$ as taking $b_1'=b_2'=\frac{|b_1+b_2|}{2}$ and
    changing $a_1$ to $a_1'$ to fit the constraint does not decrease
    $f$. Then the optimization simplifies to
    \begin{align}
      &f(a_1, b_1) = 2(a_1 + b_1)^2 \\
      &\frac{d}{d-1}a_1^2 + \frac{2d}{d-2}b_1^2 = \frac1d,
      \quad a_1, b_1 \ge 0.
    \end{align}

  \item We compute $b_1$ from the constraint and substitute to $f$
    which gives
    \begin{align}
      &f(a_1) = 2\left(a_1 + \sqrt{ x - ya_1^2}\right)^2 \\
      &a_1 \in \left[0, \sqrt{x / y}\right]
    \end{align}
    where
    \begin{align}
      x = \frac{d-2}{2d^2}, \qquad y = \frac{d-2}{2(d-1)}.
    \end{align}

    Function $f$ has its maximum when the expression in the
    parenthesis has the maximum (as it is nonnegative). We consider
    its derivative
    \begin{align}
      \frac{\partial}{\partial a_1} \left(a_1 + \sqrt{ x - ya_1^2}\right) &=
      1 - \frac{ya_1}{\sqrt{x - ya_1^2}}
    \end{align}
    which is zero for
    \begin{align}
      a_1^\star = \sqrt{\frac{x}{y^2 + y}}
    \end{align}
    and the second derivative is negative in $a_1^\star$ so the
    maximum is equal to
    \begin{align}
      f(a_1^\star)
      &= 2\left(\sqrt{\frac{x}{y^2 + y}} + \sqrt{\frac{xy}{y+1}}\right)^2 \\
      &= 2x(y^{-1} + 1) = \frac{3d - 4}{d^2}
    \end{align}
    The global maximum could also be on one of the boundaries but for
    $d\ge3$ $f(a_1^\star)$ is always greater than the values on the
    boundaries.
  \end{enumerate}
\end{proof}



\end{document}